\newtheorem{lemma}{Lemma}[section]
\newtheorem{theorem}{Theorem}[section]
\newtheorem{remark}{Remark}[section]
\begin{document}
\begin{center}
\textbf{\LARGE{Sequences of Inequalities among Differences of Gini Means and Divergence Measures}}
\end{center}

\smallskip
\begin{center}
\textbf{\large{Inder Jeet Taneja}}\\
Departamento de Matem\'{a}tica\\
Universidade Federal de Santa Catarina\\
88.040-900 Florian\'{o}polis, SC, Brazil.\\
\textit{e-mail: taneja@mtm.ufsc.br\\
http://www.mtm.ufsc.br/$\sim $taneja}
\end{center}

\begin{abstract}
In 1938, Gini \cite{gin} studied a mean having two parameters. Later, many authors studied properties of this mean. In particular, it contains the famous means as harmonic, geometric, arithmetic, etc. Here we considered a sequence of inequalities arising due to particular values of each parameter of Gini's mean. This sequence generates many nonnegative differences. Not all of them are convex. We have studied here convexity of these differences and again established new sequences of inequalities of these differences. Considering in terms of probability distributions these differences, we have made connections with some of well known divergence measures.
\end{abstract}

\bigskip
\textbf{Key words:} \textit{Arithmetic mean; Geometric Mean; Harmonic Mean; Gini Mean; Power Mean; Differences of Means; Divergence measures}

\bigskip
\textbf{AMS Classification:} 94A17; 26A48; 26D07.

\section{Gini Mean of order \textit{r} and \textit{s}}

The Gini \cite{gin} mean of order $r$ and $s$ is given by
\begin{equation}
\label{eq1}
E_{r,s} (a,b) =\begin{cases}
 {\left( {\frac{a^r + b^r}{a^s + b^s}} \right)^{\frac{1}{r - s}}} & {r \ne
s} \\
 {\exp \left( {\frac{a^r\ln a + b^r\ln b}{a^r + b^r}} \right)} & {r = s \ne
0} \\
 {\sqrt {ab} } & {r = s = 0} \\
\end{cases}.
\end{equation}

In particular when $s = 0$ in (\ref{eq1}), we have
\begin{equation}
\label{eq2}
E_{r,0} (a,b): = B_r (a,b) =\begin{cases}
 {\left( {\frac{a^r + b^r}{2}} \right)^{\frac{1}{r}},} & {r \ne 0} \\
 {\sqrt {ab} ,} & {r = 0} \\
\end{cases}
\end{equation}

Again, when $s = r - 1$ in (\ref{eq1}), we have
\begin{equation}
\label{eq3}
E_{r,r - 1} (a,b): = K_s (a,b) = \frac{a^r + b^r}{a^{r - 1} + b^{r - 1}},\;r
\in {\rm R}
\end{equation}

The expression (\ref{eq2}) is famous \textit{as mean of order r} or \textit{power mean}. The expression (\ref{eq3}) is known as \textit{Lehmer mean}\cite{leh}. Both these means are monotonically increasing in $r$. Moreover, these two have the following inequality \cite{che} among each other:
\begin{equation}
\label{eq4}
B_r (a,b)\begin{cases}
 { < L_r (a,b),} & {r > 1} \\
 { > L_r (a,b),} & {r < 1} \\
\end{cases}
\end{equation}

Since $E_{r,s} = E_{s,r} $, the Gini-mean $E_{r,s} (a,b)$ given by (\ref{eq1}) is an increasing function in $r$ or $s$. Using the monotonicity property \cite{czp}, \cite{sim}, \cite{san} we have the following inequalities:
\begin{align}
\label{eq5}
E_{ - 3, - 2}  \le &  E_{ - 2, - 1} \le E_{ - 3 / 2, - 1 / 2} \le E_{ - 1,0} \le E_{ - 1 / 2,0} \notag\\
& \le E_{ - 1 / 2,1 / 2} \le E_{0,1 / 2} \le E_{0,1} \le E_{0,2} \le E_{1,2}
\end{align}

\noindent or
\begin{align}
\label{eq6}
E_{ - 3, - 2} \le & E_{ - 2, - 1} \le E_{ - 3 / 2, - 1 / 2} \le E_{ - 1,0} \le E_{ - 1 / 2,0} \notag\\
&\le E_{ - 1 / 2,1 / 2} \le E_{0,1 / 2} \le E_{0,1} \le E_{1 /2,1} \le E_{1,2}
\end{align}

Equivalently, let us write as
\begin{align}
\label{eq7}
K_{ - 2} \le & K_{ - 1} \le  K_{ - 1 / 2} \le \left( {K_0 = B_{ - 1} = H}
\right) \le B_{ - 1 / 2} \notag\\
& \le \left( {K_{1 / 2} = B_0 = G} \right)  \le B_{1 / 2} \le \left( {K_1 = B_1 = A} \right) \le \left( {B_2 = S} \right) \le K_2 ,
\end{align}

\noindent or
\begin{align}
\label{eq8}
K_{ - 2} \le & K_{ - 1} \le  K_{ - 1 / 2} \le \left( {K_0 = B_{ - 1} = H}
\right) \le B_{ - 1 / 2} \notag\\
& \le \left( {K_{1 / 2} = B_0 = G} \right) \le B_{1 /2}
\le \left( {K_1 = B_1 = A} \right) \le E_{1 / 2,1} \le K_2 .
\end{align}

\noindent
where $H$, $G$, $A$ and $S$ are respectively , the \textit{harmonic}, \textit{geometric}, \textit{arithmetic} and the \textit{square-root} means. We observe from the expression (\ref{eq7}) that the means considered either are the particular cases of (\ref{eq2}) or of (\ref{eq3}), while in the second case we have we have a mean $E_{1 / 2,1} $, which is neither a particular case of (\ref{eq2}) nor of (\ref{eq3}). We can easily find values proving that there is no relation between $S$ and $E_{1 / 2,1} $. Summarizing the expressions (\ref{eq7}) and (\ref{eq8}), we can write them in joint form as
\begin{equation}
\label{eq9}
P_1 \le P_2 \le P_3 \le H \le P_4 \le G \le N_1 \le A \le \left( {P_5 \mbox{
or }S} \right) \le P_6 ,
\end{equation}

\noindent
where $P_1 = K_{ - 2} $, $P_2 = K_{ - 1} $, $P_3 = K_{ - 1 / 2} $, $P_4 =
B_{ - 1 / 2} $, $N_1 = B_{1 / 2} $, $P_5 = E_{1 / 2,1} $ and $P_5 = K_2 $.

\bigskip
In \cite{tan2, tan3}, the author studied the following inequalities:
\begin{equation}
\label{eq10}
H \le G \le N_1 \le N_3 \le N_2 \le A \le S,
\end{equation}

\noindent where
\[
N_2 (a,b) = \left( {\frac{\sqrt a + \sqrt b }{2}} \right)\left( {\sqrt
{\frac{a + b}{2}} } \right)
\]

\noindent and
\[
N_3 (a,b) = \frac{a + \sqrt {ab} + b}{3}.
\]

The expression $N_3 (a,b)$ is famous as Heron's mean. The inequalities
(\ref{eq10}) admit many non-negative differences. Bases on these difference the
author [\ldots ] proved the following result:
\begin{equation}
\label{eq11}
D_{SA} \le \left\{ {\begin{array}{l}
 \textstyle{1 \over 3}D_{SH} \le \left\{ {\begin{array}{l}
 \textstyle{1 \over 2}D_{AH} \le \left\{ {\begin{array}{l}
 4D_{N_2 N_1 } \le \textstyle{4 \over 3}D_{N_2 G} \le D_{AG} \le 4D_{AN_2}
\\
 \textstyle{1 \over 2}D_{SG} \le D_{AG} \\
 \end{array}} \right. \\\\
 \textstyle{2 \over 3}D_{SN_1 } \le \textstyle{1 \over 2}D_{SG} \\
 \end{array}} \right. \\
 \\
 \textstyle{4 \over 5}D_{SN_2 } \le 4D_{AN_2 } \\
 \textstyle{3 \over 4}D_{SN_3 } \le \textstyle{2 \over 3}D_{SN_1 } \\
 \end{array}} \right.,
\end{equation}

\noindent
where $D_{SA} = S - A$, $D_{SH} = S - H$, etc. Some applications of the
inequalities (\ref{eq11}) can be seen in \cite{szd}, \cite{sim}.

\bigskip
Combining (\ref{eq9}) and (\ref{eq10}), we have the following sequence of inequalities:
\begin{equation}
\label{eq12}
P_1 \leqslant P_2 \leqslant P_3 \leqslant H \leqslant P_4 \leqslant G
\leqslant N_1 \leqslant N_3 \leqslant N_2 \leqslant A \leqslant \left( {P_5
\mbox{ or }S} \right) \leqslant P_6 .
\end{equation}

The inequalities (\ref{eq12}) admits many nonnegative differences such as
\begin{equation}
\label{eq13}
D_{tp} (a,b) = bg_{tp} \left( {\frac{a}{b}} \right) = b\left[ {f_t \left(
{\frac{a}{b}} \right) - f_p \left( {\frac{a}{b}} \right)} \right],
\end{equation}

\noindent where
\[
g_{tp} (x) = f_t (x) - f_p (x),
\,
f_t (x) \geqslant f_p (x),
\,
\forall x > 0.
\]

More precisely, the expression (\ref{eq12}) and the function $f:(0,\infty ) \to
\mathbb{R}$ given in (\ref{eq13}) lead us to the following result:
\[
f_{P_1 } (x) \leqslant f_{P_2 } (x) \leqslant f_{P_3 } (x) \leqslant f_H (x)
\leqslant f_{P_4 } (x) \leqslant f_G (x) \leqslant f_{N_1 } (x)
\]
\begin{equation}
\label{eq14}
 \leqslant f_{N_3 } (x) \leqslant f_{N_2 } (x) \leqslant f_A (x) \leqslant
\left( {f_{P_5 } (x)\mbox{ or }f_S (x)} \right) \leqslant f_{P_6 } (x).
\end{equation}

Equivalently,
\begin{align}
& \frac{x(x^2 + 1)}{x^3 + 1} \leqslant \frac{x(x + 1)}{x^2 + 1} \leqslant
\frac{x\left( {\sqrt x + 1} \right)}{x^{3 / 2} + 1} \leqslant \frac{2x}{1 +
x} \leqslant \frac{4x}{\left( {\sqrt x + 1} \right)^2} \leqslant\notag\\
& \hspace{15pt} \leqslant \sqrt x \leqslant \left( {\frac{\sqrt x + 1}{2}} \right)^2
\leqslant \frac{x + \sqrt x + 1}{3} \leqslant \left( {\frac{\sqrt x + 1}{2}}
\right)\left( {\sqrt {\frac{x + 1}{2}} } \right) \leqslant \notag\\
\label{eq15}
& \hspace{30pt} \leqslant \frac{x + 1}{2} \leqslant \left( {\left( {\frac{x + 1}{\sqrt x +
1}} \right)^2\mbox{ or }\sqrt {\frac{x^2 + 1}{2}} } \right) \leqslant
\frac{x^2 + 1}{x + 1}.
\end{align}

In this paper our aim is to produce new inequalities for the difference of
means arising due to inequalities given in (\ref{eq12}). In another words we shall
improve considerably the inequalities given in (\ref{eq11}). For this we need
first to know the convexity of the difference of means. In total, we have 77
differences. Some of them are equal to each other with some multiplicative
constants. Some of them are not convex and some of them are convex.

\section{Convexity of Difference of Means}

Let us prove now the convexity of some of the difference of means arising
due to inequalities (\ref{eq11}). In order to prove it we shall make use of the
following lemma (ref. Taneja [\ldots ]).

\begin{lemma} Let $f:I \subset \mathbb{R}_ + \to \mathbb{R}$ be a
convex and differentiable function satisfying $f(1) = 0$. Consider a
function
\[
\phi _f (a,b) = af\left( {\frac{b}{a}} \right),
\, a,b > 0,
\]

\noindent
then the function $\phi _f (a,b)$ is convex in $\mathbb{R}_ + ^2 $.
Additionally, if $f^\prime (1) = 0$, then the following inequality hold:
\[
0 \leqslant \phi _f (a,b) \leqslant \left( {\frac{b - a}{a}} \right)\phi
_{f}' (a,b).
\]
\end{lemma}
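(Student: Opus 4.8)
The plan is to prove this lemma in two parts, corresponding to its two claims: first the convexity of $\phi_f$ on $\mathbb{R}_+^2$, and then, under the additional hypothesis $f'(1)=0$, the two-sided bound.

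For the convexity of $\phi_f(a,b) = a f(b/a)$, I would compute the Hessian directly. Writing $x = b/a$, the first partial derivatives are $\frac{\partial \phi_f}{\partial a} = f(x) - x f'(x)$ and $\frac{\partial \phi_f}{\partial b} = f'(x)$. Differentiating again, the key point is that every second-order partial derivative factors through $f''(x)$: one finds $\frac{\partial^2 \phi_f}{\partial a^2} = \frac{b^2}{a^3} f''(x)$, $\frac{\partial^2 \phi_f}{\partial b^2} = \frac{1}{a} f''(x)$, and $\frac{\partial^2 \phi_f}{\partial a \, \partial b} = -\frac{b}{a^2} f''(x)$. The determinant of the Hessian then collapses to zero (the cross term squared exactly cancels the product of the diagonal terms, since $\phi_f$ is homogeneous of degree one), while the trace is nonnegative because $f'' \geq 0$ by convexity of $f$ and $a > 0$. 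A positive-semidefinite Hessian on the convex domain $\mathbb{R}_+^2$ gives convexity of $\phi_f$. I expect this computation to be the main technical obstacle only in the bookkeeping sense; conceptually it is routine, and the cleanest presentation is to exhibit the Hessian as $f''(x)$ times the rank-one matrix built from the vector $(b/a, -1)$ scaled appropriately, which makes positive semidefiniteness transparent.

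For the inequality $0 \leq \phi_f(a,b) \leq \left(\frac{b-a}{a}\right)\phi_f'(a,b)$, I first need to pin down what $\phi_f'$ denotes; the natural reading consistent with the one-variable reduction is the derivative with respect to $b$, namely $\phi_f'(a,b) = f'(b/a)$. The left inequality $\phi_f(a,b) \geq 0$ follows because $f$ is convex with $f(1)=0$ and $f'(1)=0$, so $x=1$ is a global minimum of $f$, giving $f(b/a) \geq 0$ and hence $af(b/a) \geq 0$. The right inequality, after dividing by $a > 0$, is equivalent to the one-variable statement $f(x) \leq (x-1)f'(x)$ for all $x > 0$, where $x = b/a$.

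The right-hand inequality is the part that carries the real content, and I would prove it via the convexity of $f$ together with the normalization at $x=1$. By convexity, the tangent line at any point lies below the graph; applying this with base point $x$ and evaluation point $1$ gives $f(1) \geq f(x) + (1-x)f'(x)$, and since $f(1)=0$ this rearranges at once to $f(x) \leq (x-1)f'(x)$, which is exactly the desired bound. This uses only $f(1)=0$ and convexity for the right inequality, with $f'(1)=0$ reserved for the nonnegativity claim on the left. The subtle point to state carefully is the direction of the tangent-line inequality and the sign of $(x-1)$, but no case distinction on whether $x \gtrless 1$ is actually needed, since the tangent-line inequality $f(x) + (1-x)f'(x) \leq f(1)$ holds for all $x > 0$ regardless. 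I would therefore present the argument in exactly this order: establish convexity of $\phi_f$ via the semidefinite Hessian, then derive both halves of the inequality from the single one-variable convexity estimate.
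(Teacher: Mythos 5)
The paper itself gives no proof of this lemma; it is quoted from an earlier reference of Taneja, so there is no in-paper argument to compare against. Your proof is the standard one and is correct in substance: the Hessian of the perspective function $af(b/a)$ is $\frac{f''(b/a)}{a}\,vv^{T}$ with $v=(b/a,-1)$, hence positive semidefinite, and the two-sided bound reduces to the tangent-line inequality $f(1)\ge f(x)+(1-x)f'(x)$ at $y=1$ together with the observation that $f'(1)=0$ makes $x=1$ a global minimum. Two caveats. First, the lemma hypothesizes only that $f$ is convex and differentiable, while your Hessian computation needs $f\in C^{2}$; if you want to match the stated hypotheses you should instead use the definition directly, writing $\frac{\lambda b_1+(1-\lambda)b_2}{\lambda a_1+(1-\lambda)a_2}=\mu\frac{b_1}{a_1}+(1-\mu)\frac{b_2}{a_2}$ with $\mu=\frac{\lambda a_1}{\lambda a_1+(1-\lambda)a_2}$ and applying convexity of $f$ (in this paper's applications everything is smooth, so the Hessian route is harmless). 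Second, your reading of the notation is off: you take $\phi_f'(a,b)=\partial_b\phi_f=f'(b/a)$, but under that reading the claimed upper bound $af(x)\le\frac{b-a}{a}f'(x)$ is not homogeneous of degree one in $(a,b)$ and your "divide by $a$" step would yield $f(x)\le\frac{(x-1)f'(x)}{a}$, not $f(x)\le(x-1)f'(x)$. The intended meaning is $\phi_{f'}(a,b)=af'(b/a)$, i.e.\ the construction $\phi$ applied to the function $f'$, under which the inequality is exactly equivalent to the one-variable bound $f(x)\le(x-1)f'(x)$ that you in fact prove. So the argument you carry out does establish the lemma as intended; only the parenthetical identification of the notation needs correcting.
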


Inequalities appearing in (\ref{eq12}) admits 77 nonnegative differences. Within
these we have the following two set of equalities with multiplicative
constants:

\bigskip
\noindent (i) $D_{P_6 A} = D_{P_6 H} = \frac{1}{2}D_{AH} $.\\

\noindent (ii) $D_{P_5 P_4 } = D_{AN_3 } = \frac{2}{3}D_{AN_1 } = \frac{1}{3}D_{AG} =
2D_{N_3 N_1 } = \frac{1}{2}D_{N_3 G} = \frac{2}{3}D_{N_1 G} $.

\bigskip
Not all the differences of means appearing in (\ref{eq12}) are convex. We shall
consider only those are convex. It is easy to check that in all the cases,
$f_{( \cdot )} (1) = 1$, i.e., $g_{tp} (1) = 0$, $t > p$. According to Lemma
2.1, it is suficient to show the convexity of the functions $g_{tp} (x)$,
i.e., to show that the second order derivative of $g_{tp} (x)$ is
nonnegative for all $x > 0$.

\begin{enumerate}
\item \textbf{For }$\bf{D_{P_6 S} (a,b)}$\textbf{:} We can write$D_{P_6 S} (a,b) =
b\,g_{P_6 S} \left( {a \mathord{\left/ {\vphantom {a b}} \right.
\kern-\nulldelimiterspace} b} \right)$, where
\[
g_{P_6 S} (x) = f_{P_6 } (x) - f_S (x) = \frac{x^2 + 1}{x + 1} - \sqrt
{\frac{x^2 + 1}{2}} .
\]

This gives
\begin{equation}
\label{eq16}
{g}''_{P_6 S} (x) = \frac{2\left[ {2\left( {2x^2 + 2} \right)^{3 / 2} -
\left( {x + 1} \right)^3} \right]}{(x + 1)^3\left( {2x^2 + 2} \right)^{3 /
2}}.
\end{equation}

Since, $S \geqslant A$, this implies that $S^3 \geqslant A^3$, i.e., $\left(
{\sqrt {\frac{x^2 + 1}{2}} } \right)^3 - \left( {\frac{x + 1}{2}} \right)^3
\geqslant 0$. This gives $2\left( {2x^2 + 2} \right)^{3 / 2} - \left( {x +
1} \right)^3 \geqslant 0$. Thus we have ${g}''_{P_6 S} (x) \geqslant 0$ for
all $x > 0$.

\item \textbf{For }$\bf{D_{P_6 N_2 } (a,b)}$\textbf{:} We can write $D_{P_6 N_2 } (a,b)
= b\,g_{P_6 N_2 } \left( {a \mathord{\left/ {\vphantom {a b}} \right.
\kern-\nulldelimiterspace} b} \right)$, where
\[
g_{P_6 N_2 } (x) = f_{P_6 } (x) - f_{N_2 } (x)
 = \frac{4(x^2 + 1) - \sqrt {2x + 2} \left( {\sqrt x + 1} \right)\left( {x +
1} \right)}{4(x + 1)}.
\]

This gives
\begin{equation}
\label{eq17}
{g}''_{P_6 N_2 } (x) = \frac{\left( {x + 1} \right)^3\left( {x^{3 / 2} + 1}
\right) + 16x^{3 / 2}\left( {2x + 2} \right)^{3 / 2}}{4x^{3 / 2}(x + 1)^3(2x
+ 2)^{3 / 2}} > 0,
\,
\forall x > 0.
\end{equation}

\item \textbf{For }$\bf{D_{P_6 N_3 } (a,b)}$\textbf{:} We can write $D_{P_6 N_3 } (a,b)
= b\,g_{P_6 N_3 } \left( {a \mathord{\left/ {\vphantom {a b}} \right.
\kern-\nulldelimiterspace} b} \right)$, where
\[
g_{P_6 N_3 } (x) = f_{P_6 } (x) - f_{N_3 } (x)
 = \frac{2(x^2 + 1) - \sqrt x \left( {\sqrt x + 1} \right)^2}{3(x + 1)}.
\]

This gives
\begin{equation}
\label{eq18}
{g}''_{P_6 N_3 } (x) = \frac{48x^{3 / 2} + \left( {x + 1} \right)^3}{12x^{3
/ 2}(x + 1)^3} > 0,
\,
\forall x > 0.
\end{equation}

\item \textbf{For }$\bf{D_{P_6 N_1 } (a,b)}$\textbf{:} We can write $D_{P_6 N_1 } (a,b)
= b\,g_{P_6 N_1 } \left( {a \mathord{\left/ {\vphantom {a b}} \right.
\kern-\nulldelimiterspace} b} \right)$, where
\[
g_{P_6 N_1 } (x) = f_{P_6 } (x) - f_{N_1 } (x)
 = \frac{3(x^2 + 1) - 2\sqrt x \left( {x + \sqrt x + 1} \right)}{4(x + 1)}.
\]

This gives
\begin{equation}
\label{eq19}
{g}''_{P_6 N_1 } (x) = \frac{32x^{3 / 2} + (x + 1)^3}{8x^{3 / 2}(x + 1)^3} >
0,
\,
\forall x > 0.
\end{equation}

\item \textbf{For }$\bf{D_{P_6 G} (a,b)}$\textbf{:} We can write $D_{P_6 G} (a,b) =
b\,g_{P_6 G} \left( {a \mathord{\left/ {\vphantom {a b}} \right.
\kern-\nulldelimiterspace} b} \right)$, where
\begin{align}
g_{P_6 G} (x) & = f_{P_6 } (x) - f_G (x)
 = \frac{x^2 + 1}{x + 1} - \sqrt x \notag\\
 &= \frac{\left( {\sqrt x - 1}
\right)^2\left( {x + \sqrt x + 1} \right)}{x + 1}.\notag
\end{align}

This gives
\begin{equation}
\label{eq20}
{g}''_{P_6 G} (x) = \frac{16x^{3 / 2} + (x + 1)^3}{4x^{3 / 2}(x + 1)^3} > 0,
\,
\forall x > 0.
\end{equation}

\item \textbf{For }$\bf{D_{P_6 P_4 } (a,b)}$\textbf{:} We can write$D_{P_6 P_4 } (a,b)
= b\,g_{P_6 P_4 } \left( {a \mathord{\left/ {\vphantom {a b}} \right.
\kern-\nulldelimiterspace} b} \right)$, where
\begin{align}
g_{P_6 P_4 } (x) & = f_{P_6 } (x) - f_{P_4 } (x) = \frac{x^2 + 1}{x + 1} -
\frac{4x}{\left( {\sqrt x + 1} \right)^2}\notag\\
& = \frac{\left( {x + 1} \right)\left( {x^2 - 4x + 1} \right) + 2\sqrt x
\left( {x^2 + 1} \right)}{\left( {x + 1} \right)\left( {\sqrt x + 1}
\right)^2}.\notag
\end{align}

This gives
\begin{equation}
\label{eq21}
{g}''_{P_6 P_4 } (x) = \frac{2\left[ {3\left( {x^3 + 1} \right) + 17x(x + 1)
+ 2\sqrt x \left( {x^2 + 6x + 1} \right)} \right]}{\sqrt x \left( {\sqrt x +
1} \right)^4\left( {x + 1} \right)^3} > 0,
\,
\forall x > 0.
\end{equation}

\item \textbf{For }$\bf{D_{P_6 P_2 } (a,b)}$\textbf{:} We can write$D_{P_6 P_2 } (a,b)
= b\,g_{P_6 P_2 } \left( {a \mathord{\left/ {\vphantom {a b}} \right.
\kern-\nulldelimiterspace} b} \right)$, where
\[
g_{P_6 P_2 } (x) = f_{P_6 } (x) - f_{P_2 } (x)
 = \frac{\left( {x - 1} \right)^2\left( {x^2 + x + 1} \right)}{\left( {x +
1} \right)\left( {x^2 + 1} \right)}.
\]

This gives
\begin{equation}
\label{eq22}
{g}''_{P_6 P_2 } (x) = \frac{2\left( {x^6 + 15x^4 + 16x^3 + 15x^2 + 1}
\right)}{\left( {x + 1} \right)^3\left( {x^2 + 1} \right)^3} > 0,
\,
\forall x > 0.
\end{equation}

\item \textbf{For }$\bf{D_{P_6 P_1 } (a,b)}$\textbf{:} We can write $D_{P_6 P_1 } (a,b)
= b\,g_{P_6 P_2 } \left( {a \mathord{\left/ {\vphantom {a b}} \right.
\kern-\nulldelimiterspace} b} \right)$, where
\[
g_{P_6 P_1 } (x) = f_{P_6 } (x) - f_{P_1 } (x)
 = \frac{\left( {x - 1} \right)^2\left( {x^2 + 1} \right)}{\left( {x + 1}
\right)^3}.
\]

This gives
\begin{equation}
\label{eq23}
{g}''_{P_6 P_1 } (x) = \frac{2\left( {x^2 + 1} \right)\left(
{\begin{array}{l}
 \left( {\left( {x - 1} \right)^2 - x} \right)^2 + x^2 + \\
 + \left( {x - 1} \right)^4 + x\left( {x - 1} \right)^2 \\
 \end{array}} \right) + 8x^3}{(x^3 + 1)^3} > 0,\,\forall x > 0.
\end{equation}

\item \textbf{For }$\bf{D_{P_5 A} (a,b)}$\textbf{:} We can write $D_{P_5 A} (a,b) =
b\,g_{P_5 A} \left( {a \mathord{\left/ {\vphantom {a b}} \right.
\kern-\nulldelimiterspace} b} \right)$, where
\begin{align}
g_{P_5 A} (x) & = f_{P_5 } (x) - f_A (x)
 = \left( {\frac{x + 1}{\sqrt x + 1}} \right)^2 - \frac{x + 1}{2} \notag\\
 & = \frac{\left( {x + 1} \right)\left( {\sqrt x - 1} \right)^2}{2\left( {\sqrt x
+ 1} \right)^2}.\notag\
\end{align}

This gives
\begin{equation}
\label{eq24}
{g}''_{P_5 A} (x) = \frac{4\sqrt x \left[ {\left( {\sqrt x - 1} \right)^2 +
\sqrt x } \right] + \left( {x - 1} \right)^2}{2x^{3 / 2}\left( {\sqrt x + 1}
\right)^4} > 0,\, \forall x > 0.
\end{equation}

\item \textbf{For }$\bf{D_{P_5 N_2 } (a,b)}$\textbf{:} We can write $D_{P_5 N_2 } (a,b)
= b\,g_{P_5 N_2 } \left( {a \mathord{\left/ {\vphantom {a b}} \right.
\kern-\nulldelimiterspace} b} \right)$, where
\[
g_{P_5 N_2 } (x) = f_{P_5 } (x) - f_{N_2 } (x)
 = \left( {\frac{x + 1}{\sqrt x + 1}} \right)^2 - \left( {\frac{\sqrt x +
1}{2}} \right)\left( {\sqrt {\frac{x + 1}{2}} } \right).
\]

This gives
\begin{align}
{g}''_{P_5 N_2 } (x) & = \frac{1}{8\left( {\sqrt x + 1} \right)x^{3 / 2}\left(
{x + 1} \right)\sqrt {2x + 2} }\times \notag\\
\label{eq25}
& \hspace{15pt} \times \left( {\begin{array}{l}
 \left( {\sqrt x + 1} \right)^4\left( {x^{3 / 2} + 1} \right) + \,4\sqrt {2x
+ 2} \left( {x + 1} \right)\times \\
 \times \left[ {4\sqrt x \left( {\left( {\sqrt x - 1} \right)^2 + \sqrt x }
\right) + \left( {x - 1} \right)^2} \right] \\
 \end{array}} \right) > 0,
\,
\forall x > 0.
\end{align}

\item \textbf{For }$\bf{D_{P_5 N_3 } (a,b)}$\textbf{:} We can write $D_{P_5 N_3 } (a,b)
= b\,g_{P_5 N_3 } \left( {a \mathord{\left/ {\vphantom {a b}} \right.
\kern-\nulldelimiterspace} b} \right)$, where
\begin{align}
g_{P_5 N_3 } (x) & = f_{P_5 } (x) - f_{N_3 } (x)
 = \left( {\frac{x + 1}{\sqrt x + 1}} \right)^2 - \frac{x + \sqrt x + 1}{3}\notag\\
& = \frac{\left( {\sqrt x - 1} \right)^2\left( {2x + \sqrt x + 2}
\right)}{2\left( {\sqrt x + 1} \right)^2}.\notag
\end{align}

This gives
\begin{equation}
\label{eq26}
{g}''_{P_5 N_3 } (x) = \frac{7\left( {\sqrt x - 1} \right)^2\left( {x +
6\sqrt x + 1} \right) + 40x}{12x^{3 / 2}\left( {\sqrt x + 1} \right)^4} > 0,
\,
\forall x > 0.
\end{equation}

\item \textbf{For }$\bf{D_{P_5 N_1 } (a,b)}$\textbf{:} We can write$D_{P_5 N_1 } (a,b)
= b\,g_{P_5 N_1 } \left( {a \mathord{\left/ {\vphantom {a b}} \right.
\kern-\nulldelimiterspace} b} \right)$, where
\begin{align}
g_{P_5 N_1 } (x)   &= f_{P_5 } (x) - f_{N_1 } (x)
 = \left( {\frac{x + 1}{\sqrt x + 1}} \right)^2 - \left( {\frac{\sqrt x +
1}{2}} \right)^2\notag\\
& = \frac{\left( {\sqrt x - 1} \right)^2\left( {3x + 2\sqrt x + 3}
\right)}{4\left( {\sqrt x + 1} \right)^2}.\notag
\end{align}

This gives
\begin{equation}
\label{eq27}
{g}''_{P_5 N_1 } (x) = \frac{5\left( {\sqrt x - 1} \right)^2\left( {x +
6\sqrt x + 1} \right) + 32x}{8x^{3 / 2}\left( {\sqrt x + 1} \right)^4} > 0,
\,
\forall x > 0.
\end{equation}

\item \textbf{For }$\bf{D_{P_5 G} (a,b)}$\textbf{:} We can write $D_{P_5 G} (a,b) =
b\,g_{P_5 G} \left( {a \mathord{\left/ {\vphantom {a b}} \right.
\kern-\nulldelimiterspace} b} \right)$, where
\begin{align}
g_{P_5 G} (x) & = f_{P_5 } (x) - f_G (x) = \left( {\frac{x + 1}{\sqrt x + 1}}
\right)^2 - \sqrt x\notag\\
& = \frac{\left( {\sqrt x - 1} \right)^2\left( {x + \sqrt x + 1}
\right)}{\left( {\sqrt x + 1} \right)^2}.\notag
\end{align}

This gives
\begin{equation}
\label{eq28}
{g}''_{P_5 G} (x) = \frac{3\left[ {4\sqrt x \left( {x + 1} \right) + \left(
{x - 1} \right)^2} \right]}{4x^{3 / 2}\left( {\sqrt x + 1} \right)^4} > 0,
\,
\forall x > 0.
\end{equation}

\item \textbf{For }$\bf{D_{P_5 H} (a,b)}$\textbf{:} We can write $D_{P_5 H} (a,b) =
b\,g_{P_5 H} \left( {a \mathord{\left/ {\vphantom {a b}} \right.
\kern-\nulldelimiterspace} b} \right)$, where
\[
g_{P_5 H} (x) = f_{P_5 } (x) - f_H (x) = \left( {\frac{x + 1}{\sqrt x + 1}}
\right)^2 - \frac{2x}{x + 1}.
\]

This gives
\begin{align}
{g}''_{P_5 H} (x) & = \frac{1}{2x^{3 / 2}\left( {x + 1} \right)^3\left( {\sqrt
x + 1} \right)^4}\times\notag\\
\label{eq29}
& \hspace{15pt} \times \left( {\begin{array}{l}
 \left( {x + 1} \right)\left[ {\left( {x - 1} \right)^4 + 16x^2} \right] +
\\
 + \,4\sqrt x \left[ {\left( {x + 1} \right)^4 + 2x\left( {x^2 + 6x + 1}
\right)} \right] \\
 \end{array}} \right) > 0,
\,
\forall x > 0.
\end{align}

\item \textbf{For }$\bf{D_{P_5 P_3 } (a,b)}$\textbf{:} We can write $D_{P_5 P_3 } (a,b)
= b\,g_{P_5 P_3 } \left( {\frac{a}{b}} \right)$, where
\begin{align}
g_{P_5 P_3 } & = f_{P_5 } (x) - f_{P_3 } (x)
= \left( {\frac{x + 1}{\sqrt x + 1}} \right)^2 - \frac{x\left( {\sqrt x +
1} \right)}{x^{3 / 2} + 1}\notag\\
& = \frac{\left( {\sqrt x - 1} \right)^2\left( {x^2 + x^{3 / 2} + 3x + \sqrt
x + 1} \right)}{\left( {\sqrt x + 1} \right)\left( {x^{3 / 2} + 1}
\right)}. \notag
\end{align}

This gives
\begin{equation}
\label{eq30}
{g}''_{P_5 P_3 } (x) = \frac{s_1 (x)}{4x^{3 / 2}\left( {x^{3 / 2} + 1}
\right)^3\left( {\sqrt x + 1} \right)},
\end{equation}

\noindent
where
\[
s_1 (x) = \left( {\begin{array}{l}
 2x^5 + 2x^{9 / 2} - 27x^4 + 75x^{7 / 2} - 123x^3 + \\
 + 198x^{5 / 2} - 123x^2 + 75x^{3 / 2} - 27x + 2\sqrt x + 2 \\
 \end{array}} \right).
\]

Now we shall show that $s_1 (x) > 0$, $\forall x > 0$. Let us consider
\[
h_1 (t) = s_1 (t^2) = \left( {\begin{array}{l}
 2t^{10} + 2t^9 - 27t^8 + 75t^7 - 123t^6 + \\
 + 198t^5 - 123t^4 + 75t^3 - 27t^2 + 2t + 2 \\
 \end{array}} \right).
\]

The polynomial equation $h(t) = 0$ of 10$^{th}$ degree admits 10 solutions.
Out of them 8 are complex and are given by
\begin{align}
& 0.002340265888\pm 0.6477415967\;I;\,\,0.005577710358\pm 1.543805357\;I;\notag\\
& 0.4030868619\pm 0.2323740013\;I;\,\quad 1.862033520\pm 1.073436573\;I.\notag
\end{align}

The real solutions are $ - 5.359491864$ and $ - 0.1865848527$. Both these
solutions are negative. Since we are working with $t > 0$, this means that
there are no real positive solutions of the equation $h_1 (t) = 0$. Thus we
conclude that either $h_1 (t) > 0$ or $h_1 (t) < 0$, for all $t > 0$. In
order to check it is sufficient to see for any particular value of $h_1
(t)$, for example when $t = 1$. This gives $h_1 (1) = 56$, hereby proving
that $h_1 (t) > 0$ for all $t > 0$, consequently, $s_1 (x) > 0$, for all $x
> 0$. Finally, we have ${g}''_{P_5 P_3 } (x) > 0$, $\forall x > 0$.

\item \textbf{For }$\bf{D_{P_5 P_2 } (a,b)}$\textbf{:} We can write $D_{P_5 P_2 } (a,b)
= b\,g_{P_5 P_2 } \left( {\frac{a}{b}} \right)$, where
\begin{align}
g_{P_5 P_2 } & = f_{P_5 } (x) - f_{P_2 } (x)
 = \left( {\frac{x + 1}{\sqrt x + 1}} \right)^2 - \frac{x\left( {x + 1}
\right)}{x^2 + 1}\notag\\
&  = \frac{\left( {x + 1} \right)\left( {x^{3 / 2} - 1} \right)^2}{\left(
{\sqrt x + 1} \right)^2\left( {x^2 + 1} \right)}.\notag
\end{align}

This gives
\begin{equation}
\label{eq31}
{g}''_{P_5 P_2 } (x) = \frac{s_2 (x)}{2x^{3 / 2}\left( {x^2 + 1}
\right)^3\left( {\sqrt x + 1} \right)^4},
\end{equation}

\noindent
where
\[
s_2 (x) = \left( {\begin{array}{l}
 x^8 + 4x^{15 / 2} - 6x^7 - 12x^6 + 14x^5 + 92x^{9 / 2} + \\
 + 102x^4 + 92x^{7 / 2} + 14x^3 - 12x^2 - 6x + 4\sqrt x + 1 \\
 \end{array}} \right).
\]

Now we shall show that $s_2 (x) > 0$, $\forall x > 0$. Let us consider
\[
h_2 (t) = s_2 (t^2) = \left( {\begin{array}{l}
 t^{16} + 4t^{15} - 6t^{14} - 12t^{12} + 14t^{10} + 92t^9 + \\
 + 102t^8 + 92t^7 + 14t^6 - 12t^4 - 6t^2 + 4t + 1 \\
 \end{array}} \right).
\]

The polynomial equation $h_2 (t) = 0$ of 16$^{th}$ degree admits 16
solutions. Out of them 14 are complex (not written here) and two of them are
real given by $ - 5.230974171$ and $ - 0.1911689806$. Both these solutions
are negative. Since we are working with $t > 0$, this means that there are
no real positive solutions of the equation $h_2 (t) = 0$. Thus we conclude
that either $h_2 (t) > 0$ or $h_2 (t) < 0$, for all $t > 0$. In order to
check it is sufficient to see for any particular value of $h_2 (t)$, for
example when $t = 1$. This gives $h_2 (1) = 288$, hereby proving that $h_2
(t) > 0$ for all $t > 0$, consequently, $s_2 (x) > 0$, for all $x > 0$.
Finally, we have ${g}''_{P_5 P_2 } (x) > 0$, $\forall x > 0$.

\item \textbf{For }$\bf{D_{P_5 P_1 } (a,b)}$\textbf{:} We can write $D_{P_5 P_1 } (a,b)
= b\,g_{P_5 P_1 } \left( {\frac{a}{b}} \right)$, where
\begin{align}
g_{P_5 P_1 } & = f_{P_5 } (x) - f_{P_1 } (x)
 = \left( {\frac{x + 1}{\sqrt x + 1}} \right)^2 - \frac{x\left( {x^2 + 1}
\right)}{x^3 + 1}\notag\\
& = \frac{\left( {\begin{array}{l}
 x^4 + 2x^{7 / 2} + 4x^3 + 4x^{5 / 2} + \\
 + 4x^2 + 4x^{3 / 2} + 4x + 2\sqrt x + 1 \\
 \end{array}} \right)\left( {\sqrt x - 1} \right)^2}{\left( {\sqrt x + 1}
\right)^2\left( {x^{3 / 2} + 1} \right)}.\notag
\end{align}

This gives
\begin{equation}
\label{eq32}
{g}''_{P_5 P_1 } (x) = \frac{s_3 (x)}{2x^{3 / 2}\left( {x^3 + 1}
\right)^3\left( {\sqrt x + 1} \right)^4},
\end{equation}

\noindent
where
\[
s_3 (x) = \left( {\begin{array}{l}
 x^{11} + 4x^{21 / 2} - 6x^{10} + 4x^{19 / 2} + x^9 - 12x^{17 / 2} - \\
 - 45x^8 - 36x^{15 / 2} + 30x^7 + 144x^{13 / 2} + 99x^6 + \\
 + 48x^{11 / 2} + 99x^5 + 144x^{9 / 2} + 30x^4 - 36x^{7 / 2} - \\
 - 45x^3 - 12x^{5 / 2} + x^2 + 4x^{3 / 2} - 6x + 4\sqrt x + 1 \\
 \end{array}} \right).
\]

Now we shall show that $s_3 (x) > 0$, $\forall x > 0$. Let us consider
\[
h_3 (t) = s_3 (t^2) = \left( {\begin{array}{l}
 t^{22} + 4t^{21} - 6t^{20} + 4t^{19} + t^{18} - 12t^{17} - \\
 - 45t^{16} - 36t^{15} + 30t^{14} + 144t^{13} + 99t^{12} + \\
 + 48t^{11} + 99t^{10} + 144t^9 + 30t^8 - 36t^7 - \\
 - 45t^6 - 12t^5 + t^4 + 4t^3 - 6t^2 + 4t + 1 \\
 \end{array}} \right).
\]

The polynomial equation $h_3 (t) = 0$ of 22$^{nd}$ degree admits 22
solutions. Out of them 20 are complex (not written here) and two of them are
real given by $ - 5.269597986$ and $ - 0.1897677968$. Both these solutions
are negative. Since we are working with $t > 0$, this means that there are
no real positive solutions of the equation $h_3 (t) = 0$. Thus we conclude
that either $h_3 (t) > 0$ or $h_3 (t) < 0$, for all $t > 0$. In order to
check it is sufficient to see for any particular value of $h_3 (t)$, for
example when $t = 1$. This gives $h_3 (1) = 416$, hereby proving that $h_3
(t) > 0$ for all $t > 0$, consequently, $s_3 (x) > 0$, for all $x > 0$.
Finally, we have ${g}''_{P_5 P_1 } (x) > 0$, $\forall x > 0$.

\item \textbf{For }$\bf{D_{SP_4 } (a,b)}$\textbf{:} We can write $D_{SP_4 } (a,b) =
b\,g_{SP_4 } \left( {\frac{a}{b}} \right)$, where
\[
g_{SP_4 } = f_S (x) - f_{P_4 } (x)
 = \sqrt {\frac{x^2 + 1}{2}} - \frac{4x}{\left( {\sqrt x + 1} \right)^2}.
\]

This gives
\begin{equation}
\label{eq33}
{g}''_{SP_4 } (x) = \frac{2\left[ {\sqrt x \left( {\sqrt x + 1} \right)^4 +
3\left( {2x^2 + 2} \right)^{3 / 2}} \right]}{\sqrt x \left( {\sqrt x + 1}
\right)^4\left( {2x^2 + 2} \right)^{3 / 2}} > 0,
\,
\forall x > 0.
\end{equation}

\item \textbf{For }$\bf{D_{AP_4 } (a,b)}$\textbf{:} We can write $D_{AP_4 } (a,b) =
b\,g_{AP_4 } \left( {a \mathord{\left/ {\vphantom {a b}} \right.
\kern-\nulldelimiterspace} b} \right)$, where
\[
g_{AP_4 } (x) = f_A (x) - f_{P_4 } (x)
 = \frac{x + 1}{2} - \frac{4x}{\left( {\sqrt x + 1} \right)^2}.
\]

This gives
\begin{equation}
\label{eq34}
{g}''_{AP_4 } (x) = \frac{6}{\sqrt x \left( {\sqrt x + 1} \right)^4} > 0,
\,
\forall x > 0.
\end{equation}

\item \textbf{For }$\bf{D_{SA} (a,b)}$\textbf{:} We can write $D_{SA} (a,b) = b\,g_{SA}
\left( {a \mathord{\left/ {\vphantom {a b}} \right.
\kern-\nulldelimiterspace} b} \right)$, where
\[
g_{SA} (x) = f_S (x) - f_A (x)
 = \sqrt {\frac{x^2 + 1}{2}} - \frac{x + 1}{2}.
\]

This gives
\begin{equation}
\label{eq35}
{g}''_{SA} (x) = \frac{1}{\left( {x^2 + 1} \right)\sqrt {2x^2 + 2} } > 0,
\,
\forall x > 0.
\end{equation}

\item \textbf{For }$\bf{D_{SN_2 } (a,b)}$\textbf{:} We can write $D_{SN_2 } (a,b) =
b\,g_{SN_2 } \left( {a \mathord{\left/ {\vphantom {a b}} \right.
\kern-\nulldelimiterspace} b} \right)$, where
\[
g_{SN_2 } (x) = f_S (x) - f_{N_2 } (x)
 = \sqrt {\frac{x^2 + 1}{2}} - \sqrt {\frac{x + 1}{2}} \left( {\frac{\sqrt x
+ 1}{2}} \right).
\]

This gives
\begin{align}
{g}''_{SN_2 } (x) & = \frac{1}{8x^{3 / 2}\left( {x^2 + 1} \right)\left( {x +
1} \right)\sqrt {2x^2 + 2} \sqrt {2x + 2} }\times\notag\\
\label{eq36}
& \hspace{15pt} \times \left( {\begin{array}{l}
 \sqrt {2x^2 + 2} \left( {x^{3 / 2} + 1} \right)\left( {x^2 + 1} \right) +
\\
 + 8x^{3 / 2}\left( {x + 1} \right)\sqrt {2x + 2} \\
 \end{array}} \right),\,\forall x > 0.
\end{align}

\item \textbf{For }$\bf{D_{SN_1 } (a,b)}$\textbf{:} We can write $D_{SN_1 } (a,b) =
b\,g_{SN_1 } \left( {a \mathord{\left/ {\vphantom {a b}} \right.
\kern-\nulldelimiterspace} b} \right)$, where
\[
g_{SN_1 } (x) = f_S (x) - f_{N_1 } (x)
 = \sqrt {\frac{x^2 + 1}{2}} - \left( {\frac{\sqrt x + 1}{2}} \right)^2.
\]

This gives
\begin{equation}
\label{eq37}
{g}''_{SN_1 } (x) = \frac{8x^{3 / 2} + \left( {x^2 + 1} \right)\sqrt {2x^2 +
2} }{8x^{3 / 2}\left( {x^2 + 1} \right)\sqrt {2x^2 + 2} } > 0,
\,
\forall x > 0.
\end{equation}

\item \textbf{For }$\bf{D_{AN_2 } (a,b)}$\textbf{:} We can write $D_{AN_2 } (a,b) =
b\,g_{AN_2 } \left( {a \mathord{\left/ {\vphantom {a b}} \right.
\kern-\nulldelimiterspace} b} \right)$, where
\[
g_{AN_2 } (x) = f_A (x) - f_{N_2 } (x)
 = \frac{x + 1}{2} - \sqrt {\frac{x + 1}{2}} \left( {\frac{\sqrt x + 1}{2}}
\right).
\]

This gives
\begin{equation}
\label{eq38}
{g}''_{AN_2 } (x) = \frac{x^{3 / 2} + 1}{8x^{3 / 2}\left( {x + 1}
\right)\sqrt {2x + 2} } > 0,
\,
\forall x > 0.
\end{equation}

\item \textbf{For }$\bf{D_{AG} (a,b)}$\textbf{:} We can write $D_{AG} (a,b) = b\,g_{AG}
\left( {a \mathord{\left/ {\vphantom {a b}} \right.
\kern-\nulldelimiterspace} b} \right)$, where
\[
g_{AG} (x) = f_A (x) - f_G (x)
 = \frac{x + 1}{2} - \sqrt x
 = \frac{\left( {\sqrt x - 1} \right)^2}{2}.
\]

This gives
\begin{equation}
\label{eq39}
{g}''_{AG} (x) = \frac{1}{4x^{3 / 2}} > 0,
\,
\forall x > 0.
\end{equation}

\item \textbf{For }$\bf{D_{AH} (a,b)}$\textbf{:} We can write $D_{AH} (a,b) = b\,g_{AH}
\left( {a \mathord{\left/ {\vphantom {a b}} \right.
\kern-\nulldelimiterspace} b} \right)$, where
\[
g_{AH} (x) = f_A (x) - f_H (x)
 = \frac{x + 1}{2} - \frac{2x}{x + 1}
 = \frac{\left( {x - 1} \right)^2}{2\left( {x + 1} \right)}.
\]

This gives
\begin{equation}
\label{eq40}
{g}''_{AH} (x) = \frac{4}{\left( {x + 1} \right)^3} > 0,
\,
\forall x > 0.
\end{equation}

\item \textbf{For }$\bf{D_{N_2 N_1 } (a,b)}$\textbf{:} We can write $D_{N_2 N_1 } (a,b)
= b\,g_{N_2 N_1 } \left( {a \mathord{\left/ {\vphantom {a b}} \right.
\kern-\nulldelimiterspace} b} \right)$, where
\[
g_{N_2 N_1 } (x) = f_{N_2 } (x) - f_{N_1 } (x)
 = \sqrt {\frac{x^2 + 1}{2}} - \left( {\frac{\sqrt x + 1}{2}} \right)^2.
\]

This gives
\begin{equation}
\label{eq41}
{g}''_{N_2 N_1 } (x) = \frac{\left( {x + 1} \right)\sqrt {2x + 2} - \left(
{x^{3 / 2} + 1} \right)}{8x^{3 / 2}\left( {x + 1} \right)\sqrt {2x + 2} }.
\end{equation}

From (\ref{eq14}), we have $\left( {x + 1} \right) \geqslant \left( {\frac{\sqrt x
+ 1}{2}} \right)\sqrt {2x + 2} $. This gives
\begin{align}
\left( {x + 1} \right) &\sqrt {2x + 2} - \left( {x^{3 / 2} + 1} \right)\notag\\
 & = \left( {x + 1} \right)\sqrt {2x + 2} - \left( {\sqrt x + 1} \right)\left(
{x - \sqrt x + 1} \right) \notag\\
& \geqslant \left( {\frac{\sqrt x + 1}{2}} \right)\left[ {2x + 2 - 2\left( {x
- \sqrt x + 1} \right)} \right]\notag\\
& \geqslant \sqrt x \left( {\sqrt x + 1} \right) > 0, \,\forall x > 0.\notag
\end{align}

This proves that ${g}''_{N_2 N_1 } (x) > 0$, $\forall x > 0$.

\item \textbf{For }$\bf{D_{SG} (a,b)}$\textbf{:} We can write $D_{SG} (a,b) = b\,g_{SG}
\left( {a \mathord{\left/ {\vphantom {a b}} \right.
\kern-\nulldelimiterspace} b} \right)$, where
\[
g_S (x) = f_S (x) - f_G (x)
 = \sqrt {\frac{x^2 + 1}{2}} - \sqrt x .
\]

This gives
\begin{equation}
\label{eq42}
{g}''_{AH} (x) = \frac{4x^{3 / 2} + \sqrt {2x^2 + 2} \left( {x^2 + 1}
\right)}{4x^{3 / 2}\sqrt {2x^2 + 2} \left( {x^2 + 1} \right)} > 0,
\,
\forall x > 0.
\end{equation}
\end{enumerate}

The convexity of the expressions given in parts 20-27 is already given in
\cite{tan4} but we have written them again because we need them in the next
section.

\section{Sequences of Inequalities}

In this section we shall bring sequence of inequalities based on the differences arising due to (\ref{eq8}). The results given in this section are based on the applications of the following
lemma \cite{tan4}:

\begin{lemma} Let $f_1 ,f_2 :I \subset \mathbb{R}_ + \to \mathbb{R}$
be two convex functions satisfying the assumptions:

\noindent (i) $f_1 (1) = f_1 ^\prime (1) = 0$, $f_2 (1) = f_2 ^\prime (1) = 0$;\\
\noindent (ii) $f_1 $ and $f_2 $ are twice differentiable in $\mathbb{R}_ + $;\\
\noindent (iii) there exists the real constants $\alpha ,\beta $ such that $0
\leqslant \alpha < \beta $ and
\[
\alpha \leqslant \frac{f_1 ^{\prime \prime }(x)}{f_2 ^{\prime \prime }(x)}
\leqslant \beta, \, f_2 ^{\prime \prime }(x) > 0,
\]

\noindent
for all $x > 0$ then we have the inequalities:
\[
\alpha \mbox{ }\phi _{f_2 } (a,b) \leqslant \phi _{f_1 } (a,b) \leqslant
\beta \mbox{ }\phi _{f_2 } (a,b),
\]

\noindent
for all $a,b \in (0,\infty )$, where the function $\varphi _{( \cdot )}
(a,b)$ is as defined in Lemma 2.1.
\end{lemma}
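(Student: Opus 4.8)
The plan is to prove Lemma 3.1 by reducing the claimed mean inequalities to the pointwise inequality between the second derivatives and then applying Lemma 2.1 to a carefully constructed auxiliary function. First I would set $g(x) = \beta f_2(x) - f_1(x)$ and $h(x) = f_1(x) - \alpha f_2(x)$, and observe that by hypotheses (i)--(ii) both $g$ and $h$ are twice differentiable on $\mathbb{R}_+$ and satisfy $g(1) = g'(1) = 0$ and $h(1) = h'(1) = 0$, since each of $f_1, f_2$ vanishes together with its first derivative at $x = 1$. The role of hypothesis (iii) is then to control the signs of the second derivatives: from $\alpha \leqslant f_1''(x)/f_2''(x) \leqslant \beta$ together with $f_2''(x) > 0$ we get
\[
g''(x) = \beta f_2''(x) - f_1''(x) = f_2''(x)\left( \beta - \frac{f_1''(x)}{f_2''(x)} \right) \geqslant 0
\]
and likewise $h''(x) = f_2''(x)\big( f_1''(x)/f_2''(x) - \alpha \big) \geqslant 0$, for all $x > 0$. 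Hence both $g$ and $h$ are convex on $\mathbb{R}_+$.

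Next I would invoke Lemma 2.1 for each of $g$ and $h$. Since $g$ is convex and differentiable with $g(1) = 0$, the lemma guarantees that $\phi_g(a,b) = a\,g(b/a)$ is convex on $\mathbb{R}_+^2$, and the additional condition $g'(1) = 0$ yields the nonnegativity $\phi_g(a,b) \geqslant 0$ for all $a,b > 0$ (the left half of the inequality stated in Lemma 2.1). Applying the same reasoning to $h$ gives $\phi_h(a,b) \geqslant 0$. The final step is to translate these two nonnegativity statements back into the desired two-sided bound: because the map $f \mapsto \phi_f$ is linear in $f$, we have $\phi_g = \beta\,\phi_{f_2} - \phi_{f_1}$ and $\phi_h = \phi_{f_1} - \alpha\,\phi_{f_2}$, so $\phi_g(a,b) \geqslant 0$ rearranges to $\phi_{f_1}(a,b) \leqslant \beta\,\phi_{f_2}(a,b)$ and $\phi_h(a,b) \geqslant 0$ rearranges to $\alpha\,\phi_{f_2}(a,b) \leqslant \phi_{f_1}(a,b)$, which together are exactly the claimed chain $\alpha\,\phi_{f_2}(a,b) \leqslant \phi_{f_1}(a,b) \leqslant \beta\,\phi_{f_2}(a,b)$.

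The only real subtlety, and the step I would treat most carefully, is the linearity of $\phi$ and the verification that the auxiliary functions genuinely satisfy all the hypotheses of Lemma 2.1 rather than just the convexity. The nonnegativity conclusion of Lemma 2.1 requires both $f(1) = 0$ \emph{and} $f'(1) = 0$, so it is essential that hypothesis (i) is stated for both $f_1$ and $f_2$ simultaneously; this is what forces $g'(1) = \beta f_2'(1) - f_1'(1) = 0$ and $h'(1) = f_1'(1) - \alpha f_2'(1) = 0$. Without the vanishing of the first derivatives one would obtain convexity of $\phi_g$ and $\phi_h$ but not their nonnegativity, and the argument would collapse. I expect no genuine computational obstacle here: once the sign of $g''$ and $h''$ is pinned down from hypothesis (iii), everything else is a direct and linear application of the already-proved Lemma 2.1.
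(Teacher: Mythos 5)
Your proof is correct: forming $g=\beta f_2-f_1$ and $h=f_1-\alpha f_2$, noting that hypothesis (iii) makes both convex while hypothesis (i) gives $g(1)=g'(1)=h(1)=h'(1)=0$, and then extracting nonnegativity of $\phi_g$ and $\phi_h$ from Lemma 2.1 together with the linearity of $f\mapsto\phi_f$ is exactly the right argument, and the subtlety you flag (that the vanishing of the \emph{first} derivatives at $1$ is what upgrades convexity of $\phi_g,\phi_h$ to nonnegativity) is the genuinely essential point. The paper itself states this lemma without proof, citing an earlier reference, so there is no internal proof to compare against; but your auxiliary function $g$ is precisely the $\eta(x)=\beta f_2(x)-f_1(x)$ of Remark 3.1, and the verification that $\beta f_2-f_1\geqslant 0$ is exactly what the author carries out case by case in Theorem 3.1, so your route is the one the paper implicitly relies on.
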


\begin{remark} From the above lemma we observe that
\[
\eta (x) = \beta \,f_2 (x) - f_1 (x),
\,
\forall x > 0.
\]

\noindent This gives $\eta (1) = {\eta }'(1) = 0$. In order to obtain $\beta $, let us
consider ${\eta }''(1) = 0$. This gives $\beta = {f_1 ^{\prime \prime }(1)}
\mathord{\left/ {\vphantom {{f_1 ^{\prime \prime }(1)} {f_2 ^{\prime \prime
}(1)}}} \right. \kern-\nulldelimiterspace} {f_2 ^{\prime \prime }(1)}$. We
shall use this argument to prove the results given in the theorem below.
\end{remark}

\begin{theorem} The following sequences of inequalities hold:
\begin{align}
& \frac{1}{8}D_{P_6 P_1 } \leqslant \frac{1}{6}D_{P_6 P_2 } \leqslant D_{SA}
\leqslant \frac{1}{3}D_{SH} \leqslant \frac{1}{2}D_{AH} \leqslant\notag\\
& \hspace{10pt} \leqslant \left\{ {\begin{array}{l}
 \tfrac{4}{9}D_{P_6 N_2 } \\\\
 \left\{ {\begin{array}{l}
 \tfrac{3}{7}D_{P_6 N_3 } \\
 \tfrac{2}{5}D_{SP_4 } \\
 \end{array}} \right\} \leqslant \left\{ {\begin{array}{l}
 \tfrac{2}{5}D_{P_6 N_1 } \\
 \tfrac{2}{7}D_{P_6 P_4 } \\
 \end{array}} \right. \\
 \end{array}} \right\} \leqslant \frac{1}{3}D_{P_6 G} \leqslant \left\{
{\begin{array}{l}
 \tfrac{2}{5}D_{P_5 H} \\
 \tfrac{2}{3}D_{AP_4 } \\
 \end{array}} \right\} \leqslant \notag\\
& \hspace{20pt}  \leqslant 4D_{N_2 N_1 }  \leqslant \frac{4}{3}D_{N_2 G} \leqslant D_{AG} \leqslant 4D_{AN_2 }\leqslant \frac{2}{3}D_{P_5 G} \leqslant  \notag\\
\label{eq43}
& \hspace{30pt} \leqslant D_{P_5 N_1 } \leqslant \frac{6}{5}D_{P_5 N_3 } \leqslant \frac{4}{3}D_{P_5 N_2 }
\leqslant 2D_{P_5 A},\\\notag\\
\label{eq44}
& D_{SA} \leqslant \left\{ {\begin{array}{l}
 \tfrac{4}{5}D_{SN_2 } \\
 \tfrac{3}{4}D_{SN_3 } \\
 \end{array}} \right\} \leqslant \frac{2}{3}D_{SN_1 } \leqslant \left\{
{\begin{array}{l}
 \tfrac{1}{3}D_{P_6 G} \\
 \tfrac{1}{2}D_{SG} \\
 \end{array}} \right\} \leqslant \frac{2}{5}D_{P_5 H}
\intertext{and}
\label{eq45}
& \left\{ {\begin{array}{l}
 \tfrac{1}{8}D_{P_6 P_1 } \\
 \tfrac{2}{13}D_{P_5 P_1 } \\
 \end{array}} \right\} \leqslant \left\{ {\begin{array}{l}
 \tfrac{1}{6}D_{P_6 P_2 } \\
 \tfrac{2}{9}D_{P_5 P_2 } \\
 \end{array}} \right\} \leqslant \frac{2}{7}D_{P_5 P_3 } \leqslant
\frac{4}{9}D_{P_6 N_2 } \leqslant D_{P_6 S} \leqslant D_{AG}.
\end{align}
\end{theorem}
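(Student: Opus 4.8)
The plan is to prove every individual link in the three displayed chains by one uniform mechanism built on Lemma 3.1 and Remark 3.1. The starting observation is that each difference in the statement has the form $D_{XY}(a,b)=\phi_{g_{XY}}(a,b)=b\,g_{XY}(a/b)$, where $g_{XY}=f_X-f_Y$ is exactly one of the functions whose convexity was established in Section 2. Since every mean here satisfies $f_{(\cdot)}(1)=1$ and has the common first-order behaviour $f_{(\cdot)}'(1)=\tfrac12$ (e.g. $f_A'(1)=f_G'(1)=f_H'(1)=\tfrac12$), each difference obeys the normalization $g_{XY}(1)=g_{XY}'(1)=0$ demanded by Lemma 3.1, and is twice differentiable with $g_{XY}''>0$ on $(0,\infty)$ by the computations $(\ref{eq16})$--$(\ref{eq42})$. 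Thus the hypotheses of Lemma 3.1 are available for every pair of differences, and the theorem becomes the conjunction of finitely many pairwise comparisons, one per link (each brace in $(\ref{eq43})$ and $(\ref{eq44})$ being simply several parallel links handled independently).

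For a single link $c_1 D_1\le c_2 D_2$ with $D_i=\phi_{g_i}$, I would rewrite it as $\phi_{g_1}\le\beta\,\phi_{g_2}$ with $\beta=c_2/c_1$ and invoke the upper-bound half of Lemma 3.1. By Remark 3.1 the only admissible constant is $\beta=g_1''(1)/g_2''(1)$, so the first step is to confirm that the tabulated coefficient equals this ratio. For instance $\tfrac18 D_{P_6P_1}\le\tfrac16 D_{P_6P_2}$ forces $\beta=\tfrac43$, and evaluating $(\ref{eq23})$ and $(\ref{eq22})$ at $x=1$ gives $g_{P_6P_1}''(1)=2$ and $g_{P_6P_2}''(1)=\tfrac32$, whose quotient is indeed $\tfrac43$; likewise $\tfrac16 D_{P_6P_2}\le D_{SA}$ needs $\beta=6=\tfrac32\big/\tfrac14=g_{P_6P_2}''(1)/g_{SA}''(1)$ from $(\ref{eq22})$ and $(\ref{eq35})$. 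The actual content of the link is then the pointwise estimate $g_1''(x)\le\beta\,g_2''(x)$ for all $x>0$; equivalently, setting $\eta=\beta g_2-g_1$ one has $\eta(1)=\eta'(1)=0$, and it suffices to prove $\eta''=\beta g_2''-g_1''\ge0$, i.e. that $\eta$ is convex, whence $\eta\ge0$ and the inequality follows.

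To discharge these second-derivative comparisons I would substitute the explicit formulas $(\ref{eq16})$--$(\ref{eq42})$. When both $g_i''$ are rational in $x$ (as for $D_{AG}$, $D_{AH}$, $D_{P_6P_2}$, $D_{P_6P_1}$), the inequality $g_1''(x)\le\beta g_2''(x)$ clears denominators to a polynomial inequality decided by inspecting the signs of its coefficients. When radicals intervene---typically through $S=\sqrt{(x^2+1)/2}$, through $N_2$, or through $P_5$---I would reuse the device of Section 2: substitute $t=\sqrt x$, reduce the cleared inequality to a polynomial $h(t)$, verify that $h$ has no positive real root by locating all its roots (exactly as was done for $s_1,s_2,s_3$), and then fix the sign from a single evaluation such as $h(1)$. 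This turns every radical link into a finite root-counting check.

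The main obstacle is concentrated in precisely these radical links, where after clearing denominators one is left with a high-degree polynomial in $\sqrt x$ whose positivity is not transparent and for which the root-location argument is essential. A second, more structural difficulty is that convexity of $\eta=\beta g_2-g_1$ is not automatic: for two convex functions the combination $\beta g_2-g_1$ with $\beta=g_1''(1)/g_2''(1)$ need only vanish to second order at $x=1$, so one must genuinely verify that $x=1$ furnishes the \emph{maximum} of the ratio $g_1''/g_2''$ rather than its minimum---otherwise the inequality would reverse and the link would fail. The bookkeeping is therefore twofold: read off $\beta$ correctly from $g_1''(1)/g_2''(1)$ for each of the many links, and then establish $\eta''\ge0$ globally; the radical cases are where both tasks are hardest.
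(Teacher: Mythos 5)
Your plan is logically sound but it is not the route the paper takes, and the difference matters. You and the paper agree on the first step: Remark 3.1 is used only to \emph{guess} the constant, $\beta=g_1''(1)/g_2''(1)$, and your spot-checks of $\beta_{P_6P_1\_P_6P_2}=\tfrac43$ and $\beta_{P_6P_2\_SA}=6$ against \eqref{eq22}, \eqref{eq23}, \eqref{eq35} are correct. Where you diverge is in what gets verified afterwards. You propose to establish the \emph{pointwise second-derivative bound} $g_1''(x)\leqslant\beta\,g_2''(x)$ for all $x>0$ (hypothesis (iii) of Lemma 3.1, equivalently convexity of $\eta=\beta g_2-g_1$). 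The paper never does this for Theorem 3.1: for each link it writes $\beta D_2-D_1=b\,(\beta f_2-f_1)(a/b)$ and shows directly that $\beta f_2-f_1\geqslant0$ at the level of the functions themselves, which in every case reduces after the squaring argument to an explicit factorization of the form $(\sqrt x-1)^{4}\cdot(\text{polynomial with positive coefficients})$ divided by a positive denominator. This is strictly weaker than what you set out to prove, and correspondingly easier: the functions $f_X-f_Y$ are much lower-degree objects than the second derivatives \eqref{eq16}--\eqref{eq42}, and the factorizations come out clean, with no need for the numerical root-location arguments you anticipate.

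The substantive risk in your version is the one you half-identify yourself: $\eta(1)=\eta'(1)=0$ together with $\eta\geqslant0$ does \emph{not} imply $\eta''\geqslant0$ everywhere, so for a given link the ratio $g_1''/g_2''$ may fail to attain its supremum at $x=1$ even though the stated inequality $\beta\,\phi_{g_2}\geqslant\phi_{g_1}$ is true. In that event your method breaks on that link while the theorem still holds, and you would be forced back to the paper's function-level verification anyway. You frame this as ``the link would fail,'' which conflates failure of your sufficient condition with failure of the inequality. Since you have not checked the second-derivative comparison for any of the roughly forty links, and since it is a genuinely stronger claim than the theorem requires, the plan as written leaves this open. (For what it is worth, the paper does use your ratio-monotonicity strategy, with an explicit sign analysis of the derivative of $g_1''/g_2''$, but only in Theorem 4.1, where the comparison functions are transcendental and direct factorization is unavailable.) I would also note that several links ($D_{SA}\leqslant\tfrac13 D_{SH}$, $D_{N_2N_1}\leqslant\tfrac13 D_{N_2G}$, $D_{AG}\leqslant4D_{AN_2}$, etc.) are simply quoted from \eqref{eq11} rather than re-proved; your uniform mechanism would re-derive them, which is harmless but not what the paper does.
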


\begin{proof} We shall prove each part separately.
\begin{enumerate}
\item \textbf{For }$\bf{D_{P_6 P_1 } \leqslant \frac{4}{3}D_{P_6 P_2 } }$\textbf{: }Let
us consider the function $_{ }g_{P_6 P_1 \_P_6 P_2 } (x) = {{f}''_{P_6 P_1
} (x)} \mathord{\left/ {\vphantom {{{f}''_{P_6 P_1 } (x)} {{f}''_{P_6 P_2 }
(x)}}} \right. \kern-\nulldelimiterspace} {{f}''_{P_6 P_2 } (x)}$, where
${f}''_{P_6 P_1 } (x)$and ${f}''_{P_6 P_2 } (x)$ are as given by (\ref{eq23}) and
(\ref{eq22}) respectively. This gives
\[
\beta _{P_6 P_1 \_P_6 P_1 } = g_{P_6 P_1 \_P_6 P_2 } (1) = \frac{{f}''_{P_6
P_1 } (1)}{{f}''_{P_6 P_2 } (1)} = \frac{4}{3}.
\]

In order to prove this part we shall show that $\frac{4}{3}D_{P_6 P_2 } -
D_{P_6 P_1 } \geqslant 0$. We can write
\[
\frac{4}{3}D_{P_6 P_2 } - D_{P_6 P_1 } = bf_{P_6 P_2 \_P_6 P_1 } \left(
{\frac{a}{b}} \right),
\]

\noindent where
\[
f_{P_6 P_2 \_P_6 P_1 } (x)  = \frac{4}{3}f_{P_6 P_2 } (x) - f_{P_6 P_1 } (x)\notag\\
  = \frac{\left( {x + 1} \right)^2\left(
{x - 1} \right)^4}{\left( {x^3 + 1} \right)\left( {x^2 + 1} \right)}.\notag
\]

Since $f_{P_6 P_2 \_P_6 P_1 } (x) > 0$, $\forall x > 0,\,x \ne 1$, hence
proving the required result.

\item \textbf{For }$\bf{D_{P_6 P_2 } \leqslant 6D_{SA} }$\textbf{: }By\textbf{
}considering the function $g_{P_6 P_2 \_SA} (x) = {{f}''_{P_6 P_2 } (x)}
\mathord{\left/ {\vphantom {{{f}''_{P_6 P_2 } (x)} {{f}''_{SA} (x)}}}
\right. \kern-\nulldelimiterspace} {{f}''_{SA} (x)}$, where ${f}''_{P_6 P_2
} (x)$ and ${f}''_{SA} (x)$ are as given by (\ref{eq22}) and (\ref{eq35}) respectively,
we get $\beta _{P_6 P_1 \_SA} = g_{P_6 P_1 \_SA} (1) = 6$. We can write
\[
6D_{SA} - D_{P_6 P_2 } = b\,f_{SA\_P_6 P_2 } \left( {\frac{a}{b}} \right),
\]

\noindent where
\[
f_{SA\_P_6 P_2 } (x) = 6f_{SA} (x) - f_{P_6 P_2 } (x)
 = \frac{k_2 (x)}{\left( {x^2 + 1} \right)\left( {x + 1} \right)},
\]

\noindent with
\[
k_2 (x) = 3\sqrt {2x + 2} \left( {x + 1} \right)\left( {x^2 + 1} \right) -
\left( {4x^4 + 5x^3 + 6x^2 + 5x + 4} \right).
\]

Now we shall show that $k_2 (x) > 0$, $\forall x > 0,\,x \ne 1$. Let us
consider
\[
h_2 (x) = \left[ {3\sqrt {2x + 2} \left( {x + 1} \right)\left( {x^2 + 1}
\right)} \right]^2 - \left( {4x^4 + 5x^3 + 6x^2 + 5x + 4} \right)^2.
\]

After simplifications we have
\[
h_2 (x) = \left( {2x^4 + 4x^3 + 3x^2 + 4x + 2} \right)\left( {x - 1}
\right)^4.
\]

Since $h_2 (x) > 0$ giving $k_2 (x) > 0$, $\forall x > 0,\,x \ne 1$. Thus we
have $f_{P_6 P_2 \_P_6 P_1 } (x) > 0$, $\forall x > 0,\,x \ne 1$, thereby
proving the required result.

\textbf{Argument:} \textit{Let }$a$\textit{ and }$b$\textit{ two positive numbers, i.e., }$a > 0$\textit{ and }$b > 0$\textit{. If }$a^2 - b^2 > 0$\textit{, then we can conclude that }$a > b$\textit{ because }$a - b = ({a^2 - b^2)} \mathord{\left/ {\vphantom {{a^2 - b^2)} {(a + b)}}} \right. \kern-\nulldelimiterspace} {(a + b)}$\textit{. We have used this argument to prove }$k_2 (x) > 0, \forall x > 0, x \ne 1$\textit{. We shall use frequently this argument to prove the other parts of the theorem. }

\item \textbf{For $\bf{D_{SA} \leqslant \frac{1}{3}D_{SH}} $:} This result is already appearing
in (\ref{eq11}).

\item \textbf{For $\bf{D_{SH} \leqslant \frac{3}{2}D_{AH}} $:} This result is already appearing
in (\ref{eq11}).

\item \textbf{For }$\bf{D_{AH} \leqslant \frac{8}{9}D_{P_6 N_2 }} $\textbf{:
}By considering the function $g_{AH\_P_6 N_2 } (x) = {{f}''_{AH}
(x)} \mathord{\left/ {\vphantom {{{f}''_{AH} (x)} {{f}''_{P_6 N_2 } (x)}}}
\right. \kern-\nulldelimiterspace} {{f}''_{P_6 N_2 } (x)}$, we get $\beta
_{AH\_P_6 N_2 } = g_{AH\_P_6 N_2 } (1) = \frac{8}{9}$, where ${f}''_{AH}
(x)$ and ${f}''_{P_6 N_2 } (x)$ are as given by (\ref{eq40}) and (\ref{eq17})
respectively. We can write
\[
\frac{8}{9}D_{P_6 N_2 } - D_{AH} = b\,f_{P_6 N_2 \_AH} \left( {\frac{a}{b}}
\right),
\]

\noindent where
\[
f_{P_6 N_2 \_AH} (x) = \frac{8}{9}f_{P_6 N_2 } (x) - f_{AH} (x)
 = \frac{k_5 (x)}{18\left( {x + 1} \right)},
\]

\noindent with
\[
k_5 (x) = 7x^2 + 18x + 7 - 4\sqrt {2x + 2} \left( {\sqrt x + 1}
\right)\left( {x + 1} \right).
\]

Now we shall show that $k_5 (x) > 0$, $\forall x > 0,\,x \ne 1$. Let us consider
\[
h_5 (x) = \left( {7x^2 + 18x + 7} \right)^2 - \left[ {4\sqrt {2x + 2} \left(
{\sqrt x + 1} \right)\left( {x + 1} \right)} \right]^2.
\]

After simplifications we have
\[
h_5 (x) = \left( {17x^2 + 4x^{3 / 2} + 38x + 4\sqrt x + 17} \right)\left(
{\sqrt x - 1} \right)^4.
\]

Since $h_5 (x) > 0$ giving $k_5 (x) > 0$, $\forall x > 0,\,x \ne 1$. Thus we
have $f_{P_6 P_2 \_AH} (x) > 0$, $\forall x > 0,\,x \ne 1$, thereby proving
the required result.

\item \textbf{For }$\bf{D_{AH} \leqslant \frac{6}{7}D_{P_6 N_3 } }$\textbf{:
}By considering the function $g_{AH\_P_6 N_3 } (x) = {{f}''_{AH}
(x)} \mathord{\left/ {\vphantom {{{f}''_{AH} (x)} {{f}''_{P_6 N_3 } (x)}}}
\right. \kern-\nulldelimiterspace} {{f}''_{P_6 N_3 } (x)}$, we get $\beta
_{AH\_P_6 N_3 } = g_{AH\_P_6 N_3 } (1) = \frac{6}{7}$, where ${f}''_{AH}
(x)$ and ${f}''_{P_6 N_3 } (x)$ are as given by (\ref{eq40}) and (\ref{eq18})
respectively. We can write
\[
\frac{6}{7}D_{P_6 N_3 } - D_{AH} = b\,f_{P_6 N_3 \_AH} \left( {\frac{a}{b}}
\right),
\]

\noindent where
\[
f_{P_6 N_3 \_AH} (x) = \frac{6}{7}f_{P_6 N_3 } (x) - f_{AH} (x)
 = \frac{\left( {\sqrt x - 1} \right)^2}{18\left( {x + 1} \right)}.
\]

Since $f_{P_6 N_3 \_AH} (x) > 0$, $\forall x > 0,\,x \ne 1$, hence proving
the required result.

\item \textbf{For }$\bf{D_{AH} \leqslant \frac{2}{3}D_{SP_4 } }$\textbf{: }By\textbf{
}considering the function $g_{AH\_SP_4 } (x) = {{f}''_{AH} (x)}
\mathord{\left/ {\vphantom {{{f}''_{AH} (x)} {{f}''_{SP_4 } (x)}}} \right.
\kern-\nulldelimiterspace} {{f}''_{SP_4 } (x)}$, we get $\beta _{AH\_SP_4 }
= g_{AH\_SP_4 } (1) = \frac{2}{3}$, where ${f}''_{AH} (x)$ and ${f}''_{SP_4
} (x)$ are as given by (\ref{eq40}) and (\ref{eq33}) respectively. We can write
\[
\frac{2}{3}D_{SP_4 } - D_{AH} = b\,f_{SP_4 \_AH} \left( {\frac{a}{b}}
\right),
\]

\noindent where
\[
f_{SP_4 \_AH} (x) = \frac{2}{3}f_{SP_4 } (x) - f_{AH} (x)
 = \frac{k_7 (x)}{10\left( {x + 1} \right)\left( {\sqrt x + 1} \right)^2},
\]

\noindent with
\begin{align}
k_7 (x) & = 4\left( {x + 1} \right)\sqrt {2x + 2} \left( {\sqrt x + 1}
\right)^2 \notag\\
& \hspace{15pt}  - \left[ {5x^3 + 10x^{5 / 2} + 17x^2 + 17x + 10x\left( {\sqrt x - 1}
\right)^2 + 10\sqrt x + 5} \right].\notag
\end{align}

Now we shall show that $k_7 (x) > 0$, $\forall x > 0,\,x \ne 1$. Let us consider
\begin{align}
h_7 (x) & = \left[ {4\left( {x + 1} \right)\sqrt {2x + 2} \left( {\sqrt x + 1}
\right)^2} \right]^2 \notag\\
& \hspace{15pt} - \left[ {5x^3 + 10x^{5 / 2} + 17x^2 + 17x + 10x\left( {\sqrt x - 1}
\right)^2 + 10\sqrt x + 5} \right]^2.\notag
\end{align}

After simplifications we have
\[
h_7 (x) = \left( {7x^3 + 70x^{5 / 2} + 201x^2 + 340x^{3 / 2} + 201x +
70\sqrt x + 7} \right)\left( {\sqrt x - 1} \right)^6
\]

Since $h_7 (x) > 0$ giving $k_7 (x) > 0$, $\forall x > 0,\,x \ne 1$. Thus we
have $f_{SP_4 \_AH} (x) > 0$, $\forall x > 0,\,x \ne 1$, thereby proving the
required result.

\item \textbf{For }$\bf{D_{P_6 N_3 } \leqslant \frac{14}{15}D_{P_6 N_1 } }$\textbf{:
}By considering the function $g_{P_6 N_3 \_P_6 N_1 } (x) =
{{f}''_{P_6 N_3 } (x)} \mathord{\left/ {\vphantom {{{f}''_{P_6 N_3 } (x)}
{{f}''_{P_6 N_1 } (x)}}} \right. \kern-\nulldelimiterspace} {{f}''_{P_6 N_1
} (x)}$, we get $\beta _{P_6 N_3 \_P_6 N_1 } = g_{P_6 N_3 \_P_6 N_1 } (1) =
\frac{14}{15}$, where ${f}''_{P_6 N_3 } (x)$ and ${f}''_{P_6 N_1 } (x)$ are
as given by (\ref{eq18}) and (\ref{eq19}) respectively. We can write
\[
\frac{14}{15}D_{P_6 N_1 } - D_{P_6 N_3 } = b\,f_{P_6 N_1 \_P_6 N_3 } \left(
{\frac{a}{b}} \right),
\]

\noindent where
\[
f_{P_6 N_1 \_P_6 N_3 } (x) = \frac{14}{15}f_{P_6 N_1 } (x) - f_{P_6 N_3 }
(x)  = \frac{\left( {\sqrt x - 1} \right)^2}{30\left( {x + 1} \right)}.
\]

Since $f_{P_6 N_1 \_P_6 N_3 } (x) > 0$, $\forall x > 0,\,x \ne 1$, hence
proving the required result.

\item \textbf{For $\bf{D_{P_6 N_3 } \leqslant \frac{2}{3}D_{P_6 P_4 } }:$} By considering the
function $g_{P_6 N_3 \_P_6 P_4 } (x) = {{f}''_{P_6 N_3 } (x)}
\mathord{\left/ {\vphantom {{{f}''_{P_6 N_3 } (x)} {{f}''_{P_6 P_4 } (x)}}}
\right. \kern-\nulldelimiterspace} {{f}''_{P_6 P_4 } (x)}$, we get $\beta
_{P_6 N_3 \_P_6 P_4 } = g_{P_6 N_3 \_P_6 P_4 } (1) = \frac{2}{3}$, where
${f}''_{P_6 N_3 } (x)$ and ${f}''_{P_6 P_4 } (x)$ are as given by (\ref{eq18}) and
(\ref{eq21}) respectively. We can write
\[
\frac{2}{3}D_{P_6 P_4 } - D_{P_6 N_3 } = b\,f_{P_6 P_4 \_P_6 N_3 } \left(
{\frac{a}{b}} \right),
\]

\noindent where
\[
f_{P_6 P_4 \_P_6 N_3 } (x) = \frac{2}{3}f_{P_6 P_4 } (x) - f_{P_6 N_3 } (x)
 = \frac{\sqrt x \left( {\sqrt x - 1} \right)^4}{3\left( {\sqrt x + 1}
\right)^2\left( {x + 1} \right)}.
\]

Since $f_{P_6 P_4 \_P_6 N_3 } (x) > 0$, $\forall x > 0,\,x \ne 1$, hence
proving the required result.

\item \textbf{For $\bf{D_{SP_4 } \leqslant \frac{4}{5}D_{P_6 N_1 }} :$} By considering the
function $g_{SP_4 \_P_6 N_1 } (x) = {{f}''_{SP_4 } (x)} \mathord{\left/
{\vphantom {{{f}''_{SP_4 } (x)} {{f}''_{P_6 N_1 } (x)}}} \right.
\kern-\nulldelimiterspace} {{f}''_{P_6 N_1 } (x)}$, we get $\beta _{SP_4
\_P_6 N_1 } = g_{SP_4 \_P_6 N_1 } (1) = \frac{4}{5}$, where ${f}''_{SP_4 }
(x)$ and ${f}''_{P_6 N_1 } (x)$ are as given by (\ref{eq33}) and (\ref{eq19})
respectively. We can write
\[
\frac{4}{5}D_{P_6 N_1 } - D_{SP_4 } = b\,f_{P_6 N_1 \_SP_4 } \left(
{\frac{a}{b}} \right),
\]

\noindent where
\[
f_{P_6 N_1 \_SP_4 } (x) = \frac{4}{5}f_{P_6 N_1 } (x) - f_{SP_4 } (x)
 = \frac{k_{10} (x)}{4\left( {x + 1} \right)\left( {\sqrt x + 1}
\right)^2},
\]

\noindent with
\begin{align}
k_{10} (x) & = 3x^3 + 4x^{5 / 2} + 9x^2 + 4x\left( {\sqrt x - 1} \right)^2 +
9x + 4\sqrt x + 3\notag\\
&\hspace{15pt} - 2\sqrt {2x^2 + 2} \left( {x + 1} \right)\left( {\sqrt x + 1} \right)^2.\notag
\end{align}

Now we shall show that $k_{10} (x) > 0$, $\forall x > 0,\,x \ne 1$. Let us consider
\begin{align}
h_{10} (x) & = \left[ {3x^3 + 4x^{5 / 2} + 9x^2 + 4x\left( {\sqrt x - 1}
\right)^2 + 9x + 4\sqrt x + 3} \right]^2\notag\\
& \hspace{15pt} - \left[ {2\sqrt {2x^2 + 2} \left( {x + 1} \right)\left( {\sqrt x + 1}
\right)^2} \right]^2.\notag
\end{align}

After simplifications we have
\[
h_{10} (x) = \left( {\sqrt x - 1} \right)^4\left( {\begin{array}{l}
 x^3\left( {\sqrt x - 2} \right)^2 + 4x^3 + 20x^{5 / 2} + \\
 + 78x^2 + 20x^{3 / 2} + 4x + \left( {2\sqrt x - 1} \right)^2 \\
 \end{array}} \right).
\]

Since $h_{10} (x) > 0$ giving $k_{10} (x) > 0$, $\forall x > 0,\,x \ne 1$.
Thus we have $f_{P_6 N_1 \_SP_4 } (x) > 0$, $\forall x > 0,\,x \ne 1$,
thereby proving the required result.

\item \textbf{For $\bf{D_{SP_4 } \leqslant \frac{5}{7}D_{P_6 P_4 } }:$} By considering the
function $g_{SP_4 \_P_6 P_4 } (x) = {{f}''_{SP_4 } (x)} \mathord{\left/
{\vphantom {{{f}''_{SP_4 } (x)} {{f}''_{P_6 P_4 } (x)}}} \right.
\kern-\nulldelimiterspace} {{f}''_{P_6 P_4 } (x)}$, we get $\beta _{SP_4
\_P_6 P_4 } = g_{SP_4 \_P_6 P_4 } (1) = \frac{5}{7}$, where ${f}''_{SP_4 }
(x)$ and ${f}''_{P_6 P_4 } (x)$ are as given by (\ref{eq33}) and (\ref{eq21})
respectively. We can write
\[
\frac{5}{7}D_{P_6 P_4 } - D_{SP_4 } = b\,f_{P_6 P_4 \_SP_4 } \left(
{\frac{a}{b}} \right),
\]

\noindent where
\[
f_{P_6 P_4 \_SP_4 } (x) = \frac{5}{7}f_{P_6 P_4 } (x) - f_{SP_4 } (x)
 = \frac{k_{11} (x)}{14\left( {x + 1} \right)\left( {\sqrt x + 1}
\right)^2},
\]

\noindent with
\begin{align}
k_{11} (x) & = 10x^3 + 20x^{5 / 2} + 26x^2 + 26x + 20\sqrt x + 10\notag\\
& \hspace{15pt} - 7\sqrt {2x^2 + 2} \left( {\sqrt x + 1} \right)^2\left( {x + 1} \right).\notag
\end{align}

Now we shall show that $k_{11} (x) > 0$, $\forall x > 0,\,x \ne 1$. Let us consider
\begin{align}
h_{11} (x) & = \left( {10x^3 + 20x^{5 / 2} + 26x^2 + 26x + 20\sqrt x + 10}
\right)^2\notag\\
& \hspace{15pt} - \left[ {7\sqrt {2x^2 + 2} \left( {\sqrt x + 1} \right)^2\left( {x + 1}
\right)} \right]^2.\notag
\end{align}

After simplifications we have
\[
h_{11} (x) = 2\left( {\sqrt x - 1} \right)^4\left( {\begin{array}{l}
 x^4 + 8x^{7 / 2} + 94x^3 + 264x^{5 / 2} + \\
 + 386x^2 + 264x^{3 / 2} + 94x + 8\sqrt x + 1 \\
 \end{array}} \right).
\]

Since $h_{11} (x) > 0$ giving $k_{11} (x) > 0$, $\forall x > 0,\,x \ne 1$.
Thus we have $f_{P_6 P_4 \_SP_4 } (x) > 0$, $\forall x > 0,\,x \ne 1$,
thereby proving the required result.

\item \textbf{For $\bf{D_{P_6 N_2 } \leqslant \frac{3}{4}D_{P_6 G}} :$} By considering the
function $g_{P_6 N_2 \_P_6 G} (x) = {{f}''_{P_6 N_2 } (x)} \mathord{\left/
{\vphantom {{{f}''_{P_6 N_2 } (x)} {{f}''_{P_6 G} (x)}}} \right.
\kern-\nulldelimiterspace} {{f}''_{P_6 G} (x)}$, we get $\beta _{P_6 N_2
\_P_6 G} = g_{P_6 N_2 \_P_6 G} (1) = \frac{3}{4}$, where ${f}''_{P_6 N_2 }
(x)$ and ${f}''_{P_6 G} (x)$ are as given by (\ref{eq17}) and (\ref{eq20}) respectively.
We can write
\[
\frac{3}{4}D_{P_6 G} - D_{P_6 N_2 } = b\,f_{P_6 G\_P_6 N_2 } \left(
{\frac{a}{b}} \right),
\]

\noindent where
\[
f_{P_6 G\_P_6 N_2 } (x) = \frac{3}{4}f_{P_6 G} (x) - f_{P_6 N_2 } (x)
 = \frac{k_{12} (x)}{4\left( {x + 1} \right)},
\]

\noindent with
\[
k_{12} (x) = \sqrt {2x + 2} \left( {\sqrt x + 1} \right)\left( {x + 1}
\right) - \left( {x^2 + 1 + 3x^{3 / 2} + 3\sqrt x } \right).
\]

Now we shall show that $k_5 (x) > 0$, $\forall x > 0,\,x \ne 1$. Let us consider
\[
h_{12} (x) = \left[ {\sqrt {2x + 2} \left( {\sqrt x + 1} \right)\left( {x +
1} \right)} \right]^2 - \left( {x^2 + 1 + 3x^{3 / 2} + 3\sqrt x }
\right)^2.
\]

After simplifications we have
\[
h_{12} (x) = \left( {\sqrt x - 1} \right)^4\left( {x^2 + 2x^{3 / 2} + x +
2\sqrt x + 1} \right).
\]

Since $h_{12} (x) > 0$ giving $k_{12} (x) > 0$, $\forall x > 0,\,x \ne 1$.
Thus we have $f_{P_6 G\_P_6 N_2 } (x) > 0$, $\forall x > 0,\,x \ne 1$,
thereby proving the required result.

\item \textbf{For }$\bf{D_{P_6 N_1 } \leqslant \frac{5}{6}D_{P_6 G}} $\textbf{:
}By considering the function $g_{P_6 N_1 \_P_6 G} (x) = {{f}''_{P_6
N_1 } (x)} \mathord{\left/ {\vphantom {{{f}''_{P_6 N_1 } (x)} {{f}''_{P_6 G}
(x)}}} \right. \kern-\nulldelimiterspace} {{f}''_{P_6 G} (x)}$, we get
$\beta _{P_6 N_1 \_P_6 G} = g_{P_6 N_1 \_P_6 G} (1) = \frac{5}{6}$, where
${f}''_{P_6 N_1 } (x)$ and ${f}''_{P_6 G} (x)$ are as given by (\ref{eq19}) and
(\ref{eq20}) respectively. We can write
\[
\frac{5}{6}D_{P_6 G} - D_{P_6 N_1 } = b\,f_{P_6 G\_P_6 N_1 } \left(
{\frac{a}{b}} \right),
\]

\noindent where
\[
f_{P_6 G\_P_6 N_1 } (x) = \frac{5}{6}f_{P_6 G} (x) - f_{P_6 N_1 } (x)
 = \frac{\left( {\sqrt x - 1} \right)^2}{12\left( {x + 1} \right)}.
\]

Since $f_{P_6 G\_P_6 N_1 } (x) > 0$, $\forall x > 0,\,x \ne 1$, hence
proving the required result.

\item  \textbf{For }$\bf{D_{P_6 P_4 } \leqslant \frac{7}{6}D_{P_6 G}} $\textbf{:
}By considering the function $g_{P_6 P_4 \_P_6 G} (x) = {{f}''_{P_6
P_4 } (x)} \mathord{\left/ {\vphantom {{{f}''_{P_6 P_4 } (x)} {{f}''_{P_6 G}
(x)}}} \right. \kern-\nulldelimiterspace} {{f}''_{P_6 G} (x)}$, we get
$\beta _{P_6 P_4 \_P_6 G} = g_{P_6 P_4 \_P_6 G} (1) = \frac{7}{6}$, where
${f}''_{P_6 P_4 } (x)$ and ${f}''_{P_6 G} (x)$ are as given by (\ref{eq21}) and
(\ref{eq20}) respectively. We can write
\[
\frac{7}{6}D_{P_6 G} - D_{P_6 P_4 } = b\,f_{P_6 G\_P_6 P_4 } \left(
{\frac{a}{b}} \right),
\]

\noindent where
\[
f_{P_6 G\_P_6 P_4 } (x) = \frac{7}{6}f_{P_6 G} (x) - f_{P_6 P_4 } (x)
 = \frac{\left( {\sqrt x - 1} \right)^4\left[ {\left( {\sqrt x - 1}
\right)^2 + \sqrt x } \right]}{12\left( {x + 1} \right)}.
\]

Since $f_{P_6 G\_P_6 P_4 } (x) > 0$, $\forall x > 0,\,x \ne 1$, hence
proving the required result.

\item \textbf{For }$\bf{D_{P_6 G} \leqslant \frac{6}{5}D_{P_5 H}} $\textbf{:
}By considering the function $g_{P_6 G\_P_5 H} (x) = {{f}''_{P_6 G}
(x)} \mathord{\left/ {\vphantom {{{f}''_{P_6 G} (x)} {{f}''_{P_5 H} (x)}}}
\right. \kern-\nulldelimiterspace} {{f}''_{P_5 H} (x)}$, we get $\beta _{P_6
G\_P_5 H} = g_{P_6 G\_P_5 H} (1) = \frac{6}{5}$, where ${f}''_{P_6 G} (x)$
and ${f}''_{P_5 H} (x)$ are as given by (\ref{eq20}) and (\ref{eq29}) respectively. We
can write
\[
\frac{6}{5}D_{P_5 H} - D_{P_6 G} = b\,f_{P_5 H\_P_6 G} \left( {\frac{a}{b}}
\right),
\]

\noindent where
\[
f_{P_5 H\_P_6 G} (x) = \frac{6}{5}f_{P_5 H} (x) - f_{P_6 G} (x)
 = \frac{\left( {\sqrt x - 1} \right)^4\left[ {\left( {\sqrt x - 1}
\right)^2 + \sqrt x } \right]}{5\left( {x + 1} \right)}.
\]

Since $f_{P_5 H\_P_6 G} (x) > 0$, $\forall x > 0,\,x \ne 1$, hence proving
the required result.

\item  \textbf{For }$\bf{D_{P_6 G} \leqslant 2D_{AP_4 } }$\textbf{: }By\textbf{
}considering the function $g_{P_6 G\_AP_4 } (x) = {{f}''_{P_6 G} (x)}
\mathord{\left/ {\vphantom {{{f}''_{P_6 G} (x)} {{f}''_{AP_4 } (x)}}}
\right. \kern-\nulldelimiterspace} {{f}''_{AP_4 } (x)}$, we get $\beta _{P_6
G\_AP_4 } = g_{P_6 G\_AP_4 } (1) = 2$, where ${f}''_{P_6 G} (x)$ and
${f}''_{AP_4 } (x)$ are as given by (\ref{eq20}) and (\ref{eq34}) respectively. We can
write
\[
2D_{AP_4 } - D_{P_6 G} = b\,f_{AP_4 \_P_6 G} \left( {\frac{a}{b}} \right),
\]

\noindent where
\[
f_{AP_4 \_P_6 G} (x) = 2f_{AP_4 } (x) - f_{P_6 G} (x)
 = \frac{\sqrt x \left( {\sqrt x - 1} \right)^4}{\left( {\sqrt x + 1}
\right)^2\left( {x + 1} \right)}.
\]

Since $f_{AP_4 \_P_6 G} (x) > 0$, $\forall x > 0,\,x \ne 1$, hence proving
the required result.

\item  \textbf{For }$\bf{D_{P_5 H} \leqslant 10D_{N_2 N_1 } }$\textbf{: }By\textbf{
}considering the function $g_{P_5 H\_N_2 N_1 } (x) = {{f}''_{P_5 H} (x)}
\mathord{\left/ {\vphantom {{{f}''_{P_5 H} (x)} {{f}''_{N_2 N_1 } (x)}}}
\right. \kern-\nulldelimiterspace} {{f}''_{N_2 N_1 } (x)}$, we get $\beta
_{P_5 H\_N_2 N_1 } = g_{P_5 H\_N_2 N_1 } (1) = 10$, where ${f}''_{P_5 H}
(x)$ and ${f}''_{N_2 N_1 } (x)$ are as given by (\ref{eq29}) and (\ref{eq41})
respectively. We can write
\[
10D_{N_2 N_1 } - D_{P_5 H} = b\,f_{N_2 N_1 \_P_5 H} \left( {\frac{a}{b}}
\right),
\]

\noindent where
\[
f_{N_2 N_1 \_P_5 H} (x) = 10f_{N_2 N_1 } (x) - f_{P_5 H} (x)
 = \frac{k_{17} (x)}{2\left( {\sqrt x + 1} \right)^2\left( {x + 1}
\right)},
\]

\noindent with
\begin{align}
k_{17} (x) & = 5\sqrt {2x + 2} \left( {\sqrt x + 1} \right)^3\left( {x + 1}
\right)\notag\\
& \hspace{15pt} - \left( {7x^3 + 20x^{5 / 2} + 37x^2 + 32x^{3 / 2} + 37x + 20\sqrt x + 7}
\right).\notag
\end{align}

Now we shall show that $k_{17} (x) > 0$, $\forall x > 0,\,x \ne 1$. Let us consider
\begin{align}
h_{17} (x) & = \left[ {5\sqrt {2x + 2} \left( {\sqrt x + 1} \right)^3\left( {x
+ 1} \right)} \right]^2 \notag\\
& \hspace{15pt} - \left( {7x^3 + 20x^{5 / 2} + 37x^2 + 32x^{3 / 2} + 37x + 20\sqrt x + 7}
\right)^2. \notag
\end{align}

After simplifications we have
\[
h_{17} (x) = \left( {\sqrt x - 1} \right)^4\left( {\begin{array}{l}
 x^4 + 24x^{7 / 2} + 72x^3 + 120x^{5 / 2} + \\
 + 126x^2 + 120x^{3 / 2} + 72x + 24\sqrt x + 1 \\
 \end{array}} \right).
\]

Since $h_{17} (x) > 0$ giving $k_{17} (x) > 0$, $\forall x > 0,\,x \ne 1$.
Thus we have $f_{N_2 N_1 \_P_5 H} (x) > 0$, $\forall x > 0,\,x \ne 1$,
thereby proving the required result.

\item \textbf{For $\bf{ D_{AP_4 } \leqslant 6D_{N_2 N_1 }} :$} By considering the function
$g_{AP_4 \_N_2 N_1 } (x) = {{f}''_{AP_4 } (x)} \mathord{\left/ {\vphantom
{{{f}''_{AP_4 } (x)} {{f}''_{N_2 N_1 } (x)}}} \right.
\kern-\nulldelimiterspace} {{f}''_{N_2 N_1 } (x)}$, we get $\beta _{AP_4
\_N_2 N_1 } = g_{AP_4 \_N_2 N_1 } (1) = 6$, where ${f}''_{AP_4 } (x)$ and
${f}''_{N_2 N_1 } (x)$ are as given by (\ref{eq34}) and (\ref{eq41}) respectively. We
can write
\[
6D_{N_2 N_1 } - D_{AP_4 } = b\,f_{N_2 N_1 \_AP_4 } \left( {\frac{a}{b}}
\right),
\]

\noindent where
\[
f_{N_2 N_1 \_AP_4 } (x) = 6f_{N_2 N_1 } (x) - f_{AP_4 } (x)
 = \frac{k_{18} (x)}{2\left( {\sqrt x + 1} \right)^2},
\]

\noindent with
\[
k_{18} (x) = 3\sqrt {2x + 2} \left( {\sqrt x + 1} \right)^3
 - 2\left( {2x^2 + 7x^{3 / 2} + 6x + 7\sqrt x + 2} \right).
\]

Now we shall show that $k_{18} (x) > 0$, $\forall x > 0,\,x \ne 1$. Let us consider
\[
h_{18} (x) = \left[ {3\sqrt {2x + 2} \left( {\sqrt x + 1} \right)^3}
\right]^2
 - \left( {4x^2 + 14x^{3 / 2} + 12x + 14\sqrt x + 4} \right)^2
\]

After simplifications we have
\[
h_{18} (x) = 2\left( {\sqrt x - 1} \right)^4\left( {x^2 + 2x^{3 / 2} +
2\sqrt x + 1} \right).
\]

Since $h_{18} (x) > 0$ giving $k_{18} (x) > 0$, $\forall x > 0,\,x \ne 1$.
Thus we have $f_{N_2 N_1 \_AP_4 } (x) > 0$, $\forall x > 0,\,x \ne 1$,
thereby proving the required result.

\item \textbf{For $\bf{ D_{N_2 N_1 } \leqslant \frac{1}{3}D_{N_2 G} }:$} This result is already
appearing in (\ref{eq11}).

\item \textbf{For  $\bf{D_{N_2 G} \leqslant \frac{3}{4}D_{AG}} :$} This result is already appearing in
(\ref{eq11}).

\item \textbf{For $\bf{D_{AG} \leqslant 4D_{AN_2 } }:$} This result is already appearing in (\ref{eq11}).

\item \textbf{For $\bf{D_{AN_2 } \leqslant \frac{1}{6}D_{P_5 G}} :$} By considering the
function $g_{AN_2 \_P_5 G} (x) = {{f}''_{AN_2 } (x)} \mathord{\left/
{\vphantom {{{f}''_{AN_2 } (x)} {{f}''_{P_5 G} (x)}}} \right.
\kern-\nulldelimiterspace} {{f}''_{P_5 G} (x)}$, we get $\beta _{AN_2 \_P_5
G} = g_{AN_2 \_P_5 G} (1) = \frac{1}{6}$, where ${f}''_{AN_2 } (x)$ and
${f}''_{P_5 G} (x)$ are as given by (\ref{eq38}) and (\ref{eq28}) respectively. We can
write
\[
\frac{3}{4}D_{P_5 G} - D_{AN_2 } = b\,f_{AN_2 \_P_5 G} \left( {\frac{a}{b}}
\right),
\]

\noindent where
\[
f_{AN_2 \_P_5 G} (x) = \frac{1}{6}f_{P_5 G} (x) - f_{AN_2 } (x)
 = \frac{k_{22} (x)}{12\left( {\sqrt x + 1} \right)},
\]

\noindent
with $k_{22} (x) = k_{18} (x) > 0$, $\forall x > 0,\,x \ne 1$. Thus we have
$f_{N_2 N_1 \_AP_4 } (x) > 0$, $\forall x > 0,\,x \ne 1$, thereby proving
the required result.

\item \textbf{For }$\bf{D_{P_5 G} \leqslant \frac{3}{2}D_{P_5 N_1 } }$\textbf{:
}By considering the function $g_{P_5 G\_P_5 N_1 } (x) = {{f}''_{P_5
G} (x)} \mathord{\left/ {\vphantom {{{f}''_{P_5 G} (x)} {{f}''_{P_5 N_1 }
(x)}}} \right. \kern-\nulldelimiterspace} {{f}''_{P_5 N_1 } (x)}$, we get
$\beta _{P_5 G\_P_5 N_1 } = g_{P_5 G\_P_5 N_1 } (1) = \frac{3}{2}$, where
${f}''_{P_5 G} (x)$ and ${f}''_{P_5 N_1 } (x)$ are as given by (\ref{eq28}) and
(\ref{eq27}) respectively. We can write
\[
\frac{3}{2}D_{P_5 N_1 } - D_{P_5 G} = b\,f_{P_5 G\_P_5 N_1 } \left(
{\frac{a}{b}} \right),
\]

\noindent where
\[
f_{P_5 G\_P_5 N_1 } (x) = \frac{3}{2}f_{P_5 G} (x) - f_{P_5 N_1 } (x)
 = \frac{\left( {\sqrt x - 1} \right)^2}{8\left( {x + 1} \right)}.
\]

Since $f_{P_5 G\_P_5 N_1 } (x) > 0$, $\forall x > 0,\,x \ne 1$, hence
proving the required result.

\item  \textbf{For }$\bf{D_{P_5 N_1 } \leqslant \frac{6}{5}D_{P_5 N_3 }}$\textbf{:
}By considering the function $g_{P_5 N_1 \_P_5 N_3 } (x) =
{{f}''_{P_5 N_1 } (x)} \mathord{\left/ {\vphantom {{{f}''_{P_5 N_1 } (x)}
{{f}''_{P_5 N_3 } (x)}}} \right. \kern-\nulldelimiterspace} {{f}''_{P_5 N_3
} (x)}$, we get $\beta _{P_5 N_1 \_P_5 N_3 } = g_{P_5 N_1 \_P_5 N_3 } (1) =
\frac{6}{5}$, where ${f}''_{P_5 N_1 } (x)$ and ${f}''_{P_5 N_3 } (x)$ are as
given by (\ref{eq27}) and (\ref{eq26}) respectively. We can write
\[
\frac{6}{5}D_{P_5 N_3 } - D_{P_5 N_1 } = b\,f_{P_5 N_3 \_P_5 N_1 } \left(
{\frac{a}{b}} \right),
\]

\noindent where
\[
f_{P_5 N_3 \_P_5 N_1 } (x) = \frac{6}{5}f_{P_5 N_3 } (x) - f_{P_5 N_1 } (x)
 = \frac{\left( {\sqrt x - 1} \right)^2}{20\left( {x + 1} \right)}.
\]

Since $f_{P_5 N_3 \_P_5 N_1 } (x) > 0$, $\forall x > 0,\,x \ne 1$, hence
proving the required result.

\item  \textbf{For $\bf{D_{P_5 N_3 } \leqslant \frac{10}{9}D_{P_5 N_2 } }: $} By considering the
function $g_{P_5 N_3 \_P_5 N_2 } (x) = {{f}''_{P_5 N_3 } (x)}
\mathord{\left/ {\vphantom {{{f}''_{P_5 N_3 } (x)} {{f}''_{P_5 N_2 } (x)}}}
\right. \kern-\nulldelimiterspace} {{f}''_{P_5 N_2 } (x)}$, we get $\beta
_{P_5 N_3 \_P_5 N_2 } = g_{P_5 N_3 \_P_5 N_2 } (1) = \frac{10}{9}$, where
${f}''_{P_5 N_3 } (x)$ and ${f}''_{P_5 N_2 } (x)$ are as given by (\ref{eq26}) and
(\ref{eq25}) respectively. We can write
\[
\frac{10}{9}D_{P_5 N_2 } - D_{P_5 N_3 } = b\,f_{P_5 N_2 \_P_5 N_3 } \left(
{\frac{a}{b}} \right),
\]

\noindent where
\[
f_{P_5 N_2 \_P_5 N_3 } (x) = \frac{10}{9}f_{P_5 N_2 } (x) - f_{P_5 N_3 } (x)
 = \frac{k_{25} (x)}{18\left( {\sqrt x + 1} \right)^2},
\]

\noindent with
\[
k_{25} (x) = 2\left( {4x^2 + 9x^{3 / 2} + 14x + 9\sqrt x + 4} \right) -
5\sqrt {2x + 2} \left( {\sqrt x + 1} \right)^3.
\]

Now we shall show that $k_{25} (x) > 0$, $\forall x > 0,\,x \ne 1$. Let us consider
\[
h_{25} (x) = \left[ {2\left( {4x^2 + 9x^{3 / 2} + 14x + 9\sqrt x + 4}
\right)} \right]^2 - \left[ {5\sqrt {2x + 2} \left( {\sqrt x + 1} \right)^3}
\right]^2.
\]

After simplifications we have
\[
h_{25} (x) = 2\left( {\sqrt x - 1} \right)^4\left( {7x^2 + 22x^{3 / 2} + 32x
+ 22\sqrt x + 7} \right).
\]

Since $h_{25} (x) > 0$ giving $k_{25} (x) > 0$, $\forall x > 0,\,x \ne 1$.
Thus we have $f_{P_5 N_2 \_P_5 N_3 } (x) > 0$, $\forall x > 0,\,x \ne 1$,
thereby proving the required result.

\item  \textbf{For $\bf{D_{P_5 N_2 } \leqslant \frac{3}{2}D_{P_5 A} }:$} By considering the
function $g_{P_5 N_2 \_P_5 A} (x) = {{f}''_{P_5 N_2 } (x)} \mathord{\left/
{\vphantom {{{f}''_{P_5 N_2 } (x)} {{f}''_{P_5 A} (x)}}} \right.
\kern-\nulldelimiterspace} {{f}''_{P_5 A} (x)}$, we get $\beta _{P_5 N_2
\_P_5 A} = g_{P_5 N_2 \_P_5 A} (1) = \frac{3}{2}$, where ${f}''_{P_5 N_2 }
(x)$ and ${f}''_{P_5 A} (x)$ are as given by (\ref{eq25}) and (\ref{eq24}) respectively.
We can write
\[
\frac{3}{2}D_{P_5 A} - D_{P_5 N_2 } = b\,f_{P_5 A\_P_5 N_2 } \left(
{\frac{a}{b}} \right),
\]

\noindent where
\[
f_{P_5 A\_P_5 N_2 } (x) = \frac{3}{2}f_{P_5 A} (x) - f_{P_5 N_2 } (x)
 = \frac{k_{26} (x)}{4\left( {\sqrt x + 1} \right)^2},
\]

\noindent with
\[
k_{26} (x) = \sqrt {2x + 2} \left( {\sqrt x + 1} \right)^3 - \left( {x + 1}
\right)\left( {x + 6\sqrt x + 1} \right).
\]

Let us consider
\[
h_{26} (x) = \left[ {\sqrt {2x + 2} \left( {\sqrt x + 1} \right)^3}
\right]^2 - \left[ {\left( {x + 1} \right)\left( {x + 6\sqrt x + 1} \right)}
\right]^2.
\]

After simplifications we have
\[
h_{26} (x) = \left( {\sqrt x - 1} \right)^4\left( {x + 1} \right)\left( {x +
4\sqrt x + 1} \right).
\]

Since $h_{26} (x) > 0$ giving $k_{26} (x) > 0$, $\forall x > 0,\,x \ne 1$.
Thus we have $f_{P_5 A\_P_5 N_2 } (x) > 0$, $\forall x > 0,\,x \ne 1$,
thereby proving the required result.

Parts 1-26 prove the sequences of inequalities appearing in (\ref{eq43}).

\item \textbf{For $\bf{D_{SA} \leqslant \frac{4}{5}D_{SN_2 } }:$} This result is already appearing in
(\ref{eq11}).

\item \textbf{For $\bf{D_{SA} \leqslant \frac{3}{4}D_{SN_3 }} :$} This result is already appearing in
(\ref{eq11}).

\item  \textbf{For  $\bf{D_{SN_2 } \leqslant \frac{5}{6}D_{SN_1}}:$}  By considering the
function $g_{SN_2 \_SN_1 } (x) = {{f}''_{SN_2 } (x)} \mathord{\left/
{\vphantom {{{f}''_{SN_2 } (x)} {{f}''_{SN_1 } (x)}}} \right.
\kern-\nulldelimiterspace} {{f}''_{SN_1 } (x)}$, we get $\beta _{SN_2 \_SN_1
} = g_{SN_2 \_SN_1 } (1) = \frac{5}{6}$, where ${f}''_{SN_2 } (x)$ and
${f}''_{SN_1 } (x)$ are as given by (\ref{eq36}) and (\ref{eq37}) respectively. We can
write
\[
\frac{5}{6}D_{SN_1 } - D_{SN_2 } = b\,f_{SN_1 \_SN_2 } \left( {\frac{a}{b}}
\right),
\]

\noindent where
\[
f_{SN_1 \_SN_2 } (x) = \frac{5}{6}f_{SN_1 } (x) - f_{SN_2 } (x)
 = \frac{1}{24}\times k_{29} (x),
\]

\noindent with
\[
k_{29} (x) = 6\left( {\sqrt x + 1} \right)\sqrt {2x + 2} - \left[ {2\sqrt
{2x^2 + 2} + 5\left( {\sqrt x + 1} \right)^2} \right].
\]

In order to prove $k_{29} (x) > 0$, $\forall x > 0,\,x \ne 1$, here we shall
apply twice the argument given in part 2. Let us consider
\begin{align}
h_{29} (x) & = \left[ {6\left( {\sqrt x + 1} \right)\sqrt {2x + 2} } \right]^2
- 2\left[ {\sqrt {2x^2 + 2} + 5\left( {\sqrt x + 1} \right)^2} \right]^2.\notag\\
& = 39x^2 + 41x^{3 / 2} + 3\sqrt x \left( {\sqrt x - 1} \right)^2 + 41\sqrt x
+ 39\notag\\
& \hspace{25pt}  - 20\left( {\sqrt x + 1} \right)^2\sqrt {2x^2 + 2}.\notag
\end{align}

Applying again over $h_{29} (x)$ the argument given in part 2, we get
\begin{align}
h_{29a} (x) & = \left[ {39x^2 + 41x^{3 / 2} + 3\sqrt x \left( {\sqrt x - 1}
\right)^2 + 41\sqrt x + 39} \right]^2\notag\\
& \hspace{25pt} - \left[ {20\left( {\sqrt x + 1} \right)^2\sqrt {2x^2 + 2} } \right]^2\notag\\
& = \left( {\sqrt x - 1} \right)^4\left( {721x^2 + 3116x^{3 / 2} + 4806x +
3116\sqrt x + 721} \right)^2.\notag
\end{align}

Since $h_{29a} (x) > 0$ giving $h_{29} (x) > 0$, $\forall x > 0,\,x \ne 1$
and consequently, we have $k_{29} (x) > 0$, $\forall x > 0,\,x \ne 1$. Thus
we have $f_{SN_1 \_SN_2 } (x) > 0$, $\forall x > 0,\,x \ne 1$, thereby
proving the required result.

\item \textbf{For $\bf{D_{SN_3 } \leqslant \frac{8}{9}D_{SN_1 }} :$} This result is already appearing
in (\ref{eq11}).

\item \textbf{For $\bf{D_{SN_1 } \leqslant \frac{1}{2}D_{P_6 G} }:$} By considering the
function $g_{SN_1 \_P_6 G} (x) = {{f}''_{SN_1 } (x)} \mathord{\left/
{\vphantom {{{f}''_{SN_1 } (x)} {{f}''_{P_6 G} (x)}}} \right.
\kern-\nulldelimiterspace} {{f}''_{P_6 G} (x)}$, we get $\beta _{SN_1 \_P_6
G} = g_{SN_1 \_P_6 G} (1) = \frac{1}{2}$, where ${f}''_{SN_1 } (x)$ and
${f}''_{P_6 G} (x)$ are as given by (\ref{eq37}) and (\ref{eq20}) respectively. We can
write
\[
\frac{1}{2}D_{P_6 G} - D_{SN_1 } = b\,f_{P_6 G\_SN_1 } \left( {\frac{a}{b}}
\right),
\]

\noindent where
\[
f_{P_6 G\_SN_1 } (x) = \frac{1}{2}f_{P_6 G} (x) - f_{SN_1 } (x)
 = \frac{k_{31} (x)}{4\left( {x + 1} \right)},
\]

\noindent with
\[
k_{31} (x) = 3x^2 + 2x + 3 - 2\left( {x + 1} \right)\sqrt {2x^2 + 2} .
\]

Now we shall show that $k_{31} (x) > 0$, $\forall x > 0,\,x \ne 1$. Let us
consider
\[
h_{31} (x) = \left( {3x^2 + 2x + 3} \right)^2 - \left[ {2\left( {x + 1}
\right)\sqrt {2x^2 + 2} } \right]^2
\]

After simplifications we have
\[
h_{31} (x) = \left( {x - 1} \right)^4.
\]

Since $h_{31} (x) > 0$, giving $k_{31} (x) > 0$, $\forall x > 0,\,x \ne 1$.
Thus we have $f_{P_6 G\_SN_1 } (x) > 0$, $\forall x > 0,\,x \ne 1$, thereby
proving the required result.

\item \textbf{For $\bf{D_{SN_1 } \leqslant \frac{3}{4}D_{SG} }:$} This result is already appearing in
(\ref{eq11}).

\item \textbf{For $\bf{D_{SG} \leqslant \frac{4}{5}D_{P_5 H} }:$} By considering the function
$g_{SG\_P_5 H} (x) = {{f}''_{SG} (x)} \mathord{\left/ {\vphantom
{{{f}''_{SG} (x)} {{f}''_{P_5 H} (x)}}} \right. \kern-\nulldelimiterspace}
{{f}''_{P_5 H} (x)}$, we get $\beta _{SG\_P_5 H} = g_{SG\_P_5 H} (1) =
\frac{4}{5}$, where ${f}''_{SG} (x)$ and ${f}''_{P_5 H} (x)$ are as given by
(\ref{eq42}) and (\ref{eq29}) respectively. We can write
\[
\frac{4}{5}D_{P_5 H} - D_{SG} = b\,f_{P_5 H\_SG} \left( {\frac{a}{b}}
\right),
\]

\noindent where
\[
f_{P_5 H\_SG} (x) = \frac{4}{5}f_{P_5 H} (x) - f_{SG} (x)
 = \frac{k_{33} (x)}{4\left( {x + 1} \right)},
\]

\noindent with
\begin{align}
k_{33} (x) & = 2\left( {4x^3 + 5x^{5 / 2} + 11x^2 + 3x\left( {\sqrt x - 1}
\right)^2 + 11x + 5\sqrt x + 4} \right)\notag\\
& \hspace{20pt} - 5\left( {x + 1} \right)\left( {\sqrt x + 1} \right)^2\sqrt {2x^2 + 2}.\notag
\end{align}

Now we shall show that $k_{33} (x) > 0$, $\forall x > 0,\,x \ne 1$. Let us consider
\begin{align}
h_{33} (x) & = \left[ {2\left( {4x^3 + 5x^{5 / 2} + 11x^2 + 3x\left( {\sqrt x
- 1} \right)^2 + 11x + 5\sqrt x + 4} \right)} \right]^2 \notag\\
& \hspace{20pt} - \left[ {5\left( {x + 1} \right)\left( {\sqrt x + 1} \right)^2\sqrt {2x^2
+ 2} } \right]^2. \notag
\end{align}

After simplifications we have
\[
h_{33} (x) = 2\left( {\sqrt x - 1} \right)^4\left( {\begin{array}{l}
 7x^4 + 8x^{7 / 2} + 64x^3 + 120x^{5 / 2} + \\
 + 242x^2 + 120x^{3 / 2} + 64x + 8\sqrt x + 7 \\
 \end{array}} \right).
\]

Since $h_{33} (x) > 0$, giving $k_{33} (x) > 0$, $\forall x > 0,\,x \ne 1$.
Thus we have $f_{P_5 H\_SG} (x) > 0$, $\forall x > 0,\,x \ne 1$, thereby
proving the required result.

Parts 27-33 prove the sequences of inequalities appearing in (\ref{eq44}).

\item \textbf{For $\bf{D_{P_6 P_1 } \leqslant \frac{16}{9}D_{P_5 P_2 } }:$} By considering the
function $g_{P_6 P_1 \_P_5 P_2 } (x) = {{f}''_{P_6 P_1 } (x)}
\mathord{\left/ {\vphantom {{{f}''_{P_6 P_1 } (x)} {{f}''_{P_5 P_2 } (x)}}}
\right. \kern-\nulldelimiterspace} {{f}''_{P_5 P_2 } (x)}$, we get $\beta
_{P_6 P_1 \_P_5 P_2 } = g_{P_6 P_1 \_P_5 P_2 } (1) = \frac{16}{9}$, where
${f}''_{P_6 P_1 } (x)$ and ${f}''_{P_5 P_2 } (x)$ are as given by (\ref{eq23}) and
(\ref{eq31}) respectively. We can write
\[
\frac{16}{9}D_{P_5 P_2 } - D_{P_6 P_1 } = b\,f_{P_5 P_2 \_P_6 P_1 } \left(
{\frac{a}{b}} \right),
\]

\noindent where
\begin{align}
f_{P_5 P_2 \_P_6 P_1 } (x) & = \frac{16}{9}f_{P_5 P_2 } (x) - f_{P_6 P_1 }
(x) \notag\\
& = \frac{\left( {\sqrt x - 1} \right)^4\left( {\begin{array}{l}
 7x^5 + 10x^{9 / 2} + 23x^4 + 64x^{7 / 2} + 142x^3 + \\
 + 180x^{5 / 2} + 142x^2 + 64x^{3 / 2} + 23x + 10\sqrt x + 7 \\
 \end{array}} \right)}{9\left( {x^2 + 1} \right)\left( {x^3 + 1}
\right)\left( {\sqrt x + 1} \right)^2}.\notag
\end{align}

Since $f_{P_5 P_2 \_P_6 P_1 } (x) > 0$, $\forall x > 0,\,x \ne 1$, hence
proving the required result.

\item \textbf{For $\bf{D_{P_5 P_1 } \leqslant \frac{13}{12}D_{P_6 P_2 }} :$} By considering
the function $g_{P_5 P_1 \_P_6 P_2 } (x) = {{f}''_{P_5 P_1 } (x)}
\mathord{\left/ {\vphantom {{{f}''_{P_5 P_1 } (x)} {{f}''_{P_6 P_2 } (x)}}}
\right. \kern-\nulldelimiterspace} {{f}''_{P_6 P_2 } (x)}$, we get $\beta
_{P_5 P_1 \_P_6 P_2 } = g_{P_5 P_1 \_P_6 P_2 } (1) = \frac{13}{12}$, where
${f}''_{P_5 P_1 } (x)$ and ${f}''_{P_6 P_2 } (x)$ are as given by (\ref{eq31}) and
(\ref{eq22}) respectively. We can write
\[
\frac{13}{12}D_{P_6 P_2 } - D_{P_6 P_1 } = b\,f_{P_6 P_2 \_P_5 P_1 } \left(
{\frac{a}{b}} \right),
\]

\noindent where
\begin{align}
f_{P_6 P_2 \_P_5 P_1 } (x) & = \frac{13}{12}f_{P_6 P_2 } (x) - f_{P_5 P_1 }
(x) \notag\\
& = \frac{\left( {\sqrt x - 1} \right)^4\left( {\begin{array}{l}
 x^5 + 30x^{9 / 2} + 89x^4 + 152x^{7 / 2} + 181x^3 + \\
 + 190x^{5 / 2} + 181x^2 + 152x^{3 / 2} + 89x + 30\sqrt x + 1 \\
 \end{array}} \right)}{12\left( {x^2 + 1} \right)\left( {x^3 + 1}
\right)\left( {\sqrt x + 1} \right)^2}. \notag
\end{align}

Since $f_{P_6 P_2 \_P_5 P_1 } (x) > 0$, $\forall x > 0,\,x \ne 1$, hence
proving the required result.

\item \textbf{For $\bf{D_{P_5 P_1 } \leqslant \frac{13}{9}D_{P_5 P_2 }}:$} By considering the
function $g_{P_5 P_1 \_P_5 P_2 } (x) = {{f}''_{P_5 P_1 } (x)}
\mathord{\left/ {\vphantom {{{f}''_{P_5 P_1 } (x)} {{f}''_{P_5 P_2 } (x)}}}
\right. \kern-\nulldelimiterspace} {{f}''_{P_5 P_2 } (x)}$, we get $\beta
_{P_5 P_1 \_P_5 P_2 } = g_{P_5 P_1 \_P_5 P_2 } (1) = \frac{13}{9}$, where
${f}''_{P_5 P_1 } (x)$ and ${f}''_{P_5 P_2 } (x)$ are as given by (\ref{eq32}) and
(\ref{eq31}) respectively. We can write
\[
\frac{13}{9}D_{P_5 P_2 } - D_{P_6 P_1 } = b\,f_{P_5 P_2 \_P_5 P_1 } \left(
{\frac{a}{b}} \right),
\]

\noindent where
\begin{align}
f_{P_5 P_2 \_P_5 P_1 } (x) & = \frac{13}{9}f_{P_5 P_2 } (x) - f_{P_5 P_1 }
(x) \notag\\
& = \frac{\left( {\sqrt x - 1} \right)^4\left( {\begin{array}{l}
 4x^5 + 16x^{9 / 2} + 44x^4 + 88x^{7 / 2} + 139x^3 + \\
 + 162x^{5 / 2} + 139x^2 + 88x^{3 / 2} + 16\sqrt x + 44x + 4 \\
 \end{array}} \right)}{9\left( {x^2 + 1} \right)\left( {x^3 + 1}
\right)\left( {\sqrt x + 1} \right)^2}. \notag
\end{align}

Since $f_{P_5 P_2 \_P_5 P_1 } (x) > 0$, $\forall x > 0,\,x \ne 1$, hence
proving the required result.

\item \textbf{For $\bf{D_{P_6 P_2 } \leqslant \frac{12}{7}D_{P_5 P_3 }}:$} By considering the
function $g_{P_6 P_2 \_P_5 P_3 } (x) = {{f}''_{P_6 P_2 } (x)}
\mathord{\left/ {\vphantom {{{f}''_{P_6 P_2 } (x)} {{f}''_{P_5 P_3 } (x)}}}
\right. \kern-\nulldelimiterspace} {{f}''_{P_5 P_3 } (x)}$, we get $\beta
_{P_6 P_2 \_P_5 P_3 } = g_{P_6 P_2 \_P_5 P_3 } (1) = \frac{12}{7}$, where
${f}''_{P_6 P_2 } (x)$ and ${f}''_{P_5 P_3 } (x)$ are as given by (\ref{eq22}) and
(\ref{eq30}) respectively. We can write
\[
\frac{12}{7}D_{P_5 P_3 } - D_{P_6 P_2 } = b\,f_{P_5 P_3 \_P_6 P_2 } \left(
{\frac{a}{b}} \right),
\]

\noindent where
\begin{align}
f_{P_5 P_3 \_P_6 P_2 } (x) & = \frac{16}{9}f_{P_5 P_3 } (x) - f_{P_6 P_2 }
(x) \notag\\
& = \frac{\left( {\sqrt x - 1} \right)^4\left( {\begin{array}{l}
 5x^4 + x^{7 / 2} + 17x^3 + 22x^{5 / 2} + \\
 + 38x^2 + 22x^{3 / 2} + 17x + \sqrt x + 5 \\
 \end{array}} \right)}{7\left( {x^2 + 1} \right)\left( {x^3 + 1}
\right)\left( {\sqrt x + 1} \right)^2}.\notag
\end{align}

Since $f_{P_5 P_3 \_P_6 P_2 } (x) > 0$, $\forall x > 0,\,x \ne 1$, hence
proving the required result.

\item \textbf{For $\bf{D_{P_5 P_2 } \leqslant \frac{9}{7}D_{P_5 P_3 }}:$} By considering the
function $g_{P_5 P_2 \_P_5 P_3 } (x) = {{f}''_{P_5 P_2 } (x)}
\mathord{\left/ {\vphantom {{{f}''_{P_5 P_2 } (x)} {{f}''_{P_5 P_3 } (x)}}}
\right. \kern-\nulldelimiterspace} {{f}''_{P_5 P_3 } (x)}$, we get $\beta
_{P_5 P_2 \_P_5 P_3 } = g_{P_5 P_2 \_P_5 P_3 } (1) = \frac{9}{7}$, where
${f}''_{P_5 P_2 } (x)$ and ${f}''_{P_5 P_3 } (x)$ are as given by (\ref{eq31}) and
(\ref{eq30}) respectively. We can write
\[
\frac{9}{7}D_{P_5 P_3 } - D_{P_5 P_2 } = b\,f_{P_5 P_3 \_P_5 P_2 } \left(
{\frac{a}{b}} \right),
\]

\noindent where
\begin{align}
f_{P_5 P_3 \_P_5 P_2 } (x) & = \frac{9}{7}f_{P_5 P_3 } (x) - f_{P_5 P_2 } (x) \notag\\
& = \frac{\left( {\sqrt x - 1} \right)^4\left( {\begin{array}{l}
 2x^3 + 6x^{5 / 2} + 16x^2 + \\
 + 21x^{3 / 2} + 16x + 6\sqrt x + 2 \\
 \end{array}} \right)}{7\left( {x^2 + 1} \right)\left( {x - \sqrt x + 1}
\right)\left( {\sqrt x + 1} \right)^2}. \notag
\end{align}

Since $f_{P_5 P_3 \_P_5 P_2 } (x) > 0$, $\forall x > 0,\,x \ne 1$, hence
proving the required result.

\item \textbf{For $\bf{D_{P_5 P_3 } \leqslant \frac{14}{9}D_{P_6 N_2 }}:$} By considering the
function $g_{P_5 P_3 \_P_6 N_2 } (x) = {{f}''_{P_5 P_3 } (x)}
\mathord{\left/ {\vphantom {{{f}''_{P_5 P_3 } (x)} {{f}''_{P_6 N_2 } (x)}}}
\right. \kern-\nulldelimiterspace} {{f}''_{P_6 N_2 } (x)}$, we get $\beta
_{P_5 P_3 \_P_6 N_2 } = g_{P_5 P_3 \_P_6 N_2 } (1) = \frac{14}{9}$, where
${f}''_{P_5 P_3 } (x)$ and ${f}''_{P_6 N_2 } (x)$ are as given by (\ref{eq30}) and
(\ref{eq17}) respectively. We can write
\[
\frac{9}{7}D_{P_6 N_2 } - D_{P_5 P_3 } = b\,f_{P_6 N_2 \_P_5 P_3 } \left(
{\frac{a}{b}} \right),
\]

\noindent where
\begin{align}
f_{P_5 P_3 \_P_6 N_2 } (x) & = \frac{14}{9}f_{P_6 N_2} (x) - f_{P_5 P_3 }
(x) \notag\\
& = \frac{\left( {\sqrt x - 1} \right)^4\left( {\begin{array}{l}
 x^6 + 268x^{11 / 2} + 1143x^5 + 782x^{9 / 2} + \\
 + 2597x^4 + 2014x^{7 / 2} + 4478x^3 + 2014x^{5 / 2} + \\
 + 2597x^2 + 782x^{3 / 2} + 1143x + 268\sqrt x + 1 \\
 \end{array}} \right)}{18\left( {x^3 + 1} \right)\left( {\sqrt x + 1}
\right)^2}. \notag
\end{align}

Since $f_{P_6 N_2 \_P_5 P_3 } (x) > 0$, $\forall x > 0,\,x \ne 1$, hence
proving the required result.

\item \textbf{For $\bf{D_{P_6 N_2 } \leqslant \frac{9}{4}D_{P_6 S}} :$} By considering the
function $g_{P_6 N_2 \_P_6 S} (x) = {{f}''_{P_6 N_2 } (x)} \mathord{\left/
{\vphantom {{{f}''_{P_6 N_2 } (x)} {{f}''_{P_6 S} (x)}}} \right.
\kern-\nulldelimiterspace} {{f}''_{P_6 S} (x)}$, we get $\beta _{P_6 N_2
\_P_6 S} = g_{P_6 N_2 \_P_6 S} (1) = \frac{9}{4}$, where ${f}''_{P_6 N_2 }
(x)$ and ${f}''_{P_6 S} (x)$ are as given by (\ref{eq17}) and (\ref{eq16}) respectively.
We can write
\[
\frac{9}{4}D_{P_6 S} - D_{P_6 N_2 } = b\,f_{P_6 S\_P_6 N_2 } \left(
{\frac{a}{b}} \right),
\]

\noindent where
\[
f_{P_6 S\_P_6 N_2 } (x) = \frac{9}{4}f_{P_6 S} (x) - f_{P_6 N_2 } (x)
 = \frac{k_{40} (x)}{8\left( {x + 1} \right)},
\]

\noindent with
\[
k_{40} (x) = 2\sqrt {2x + 2} \left( {\sqrt x + 1} \right)\left( {x + 1}
\right) + 10x^2 + 10 - 9\left( {x + 1} \right)\sqrt {2x^2 + 2} .
\]

In order to prove $k_{40} (x) > 0$, $\forall x > 0,\,x \ne 1$, we shall
apply twice the argument given in part 2. Let us consider
\begin{align}
h_{40} (x) & = \left[ {2\sqrt {2x + 2} \left( {\sqrt x + 1} \right)\left( {x +
1} \right) + 10x^2 + 10} \right]^2 - \left[ {9\left( {x + 1} \right)\sqrt
{2x^2 + 2} } \right]^2 \notag\\
&  = 40\sqrt {2x + 2} \left( {\sqrt x + 1} \right)\left( {x^2 + 1}
\right)\left( {x + 1} \right) \notag\\
& \hspace{20pt} - \left[ {46x^4 + 8\left( {x + 1} \right)^3\left( {\sqrt x - 1} \right)^2 +
260x^3 + 28x^2 + 260x + 46} \right]. \notag
\end{align}

Applying again over $h_{40} (x)$ the argument given in part 2, we get
\begin{align}
h_{40a} (x) & = \left[ {40\sqrt {2x + 2} \left( {\sqrt x + 1} \right)\left(
{x^2 + 1} \right)\left( {x + 1} \right)} \right]^2 \notag\\
& \hspace{20pt}  - \left[ {46x^4 + 8\left( {x + 1} \right)^3\left( {\sqrt x - 1} \right)^2 +
260x^3 + 28x^2 + 260x + 46} \right]^2.\notag\\
& = 4\left( {\sqrt x - 1} \right)^4\left( {\begin{array}{l}
 71 + 2316\sqrt x + 11960x^{5 / 2} + 4090x + 11960x^{7 / 2} + \\
 + 11180x^{3 / 2} + 2316x^{11 / 2} + 11180x^{9 / 2} + 12021x^4 + \\
 + 12021x^2 + 6004x^3 + 4090x^5 + 71x^6 \\
 \end{array}} \right).\notag
\end{align}

Since $h_{40a} (x) > 0$ giving $h_{40} (x) > 0$, $\forall x > 0,\,x \ne 1$
and consequently, we have $k_{40} (x) > 0$, $\forall x > 0,\,x \ne 1$. Thus
we have $f_{P_6 S\_P_6 N_2 } (x) > 0$, $\forall x > 0,\,x \ne 1$, thereby
proving the required result.

\item \textbf{For $\bf{D_{P_6 S} \leqslant D_{AG} }:$} By considering the function $g_{P_6
S\_AG} (x) = {{f}''_{P_6 S} (x)} \mathord{\left/ {\vphantom {{{f}''_{P_6 S}
(x)} {{f}''_{AG} (x)}}} \right. \kern-\nulldelimiterspace} {{f}''_{AG}
(x)}$, we get $\beta _{P_6 S\_AG} = g_{P_6 S\_AG} (1) = 1$, where
${f}''_{P_6 S} (x)$ and ${f}''_{AG} (x)$ are as given by (\ref{eq16}) and (\ref{eq17})
respectively. We can write
\[
D_{AG} - D_{P_6 S} = b\,f_{AG\_P_6 S} \left( {\frac{a}{b}} \right),
\]

\noindent where
\[
f_{AG\_P_6 S} (x) = f_{AG} (x) - f_{P_6 S} (x)
 = \frac{k_{41} (x)}{2\left( {x + 1} \right)},
\]

\noindent with
\[
k_{41} (x) = \sqrt {2x^2 + 2} \left( {x + 1} \right) - \left[ {\left( {x -
1} \right)^2 + 2\left( {x + 1} \right)\sqrt x } \right].
\]

Now we shall show that $k_{41} (x) > 0$, $\forall x > 0,\,x \ne 1$. Let us consider
\[
h_{41} (x) = \left[ {\sqrt {2x^2 + 2} \left( {x + 1} \right)} \right]^2 -
\left[ {\left( {x - 1} \right)^2 + 2\left( {x + 1} \right)\sqrt x }
\right]^2
\]

After simplifications we have
\[
h_{41} (x) = \left( {\sqrt x + 1} \right)^2\left( {\sqrt x - 1} \right)^6.
\]

Since $h_{41} (x) > 0$, giving $k_{41} (x) > 0$, $\forall x > 0,\,x \ne 1$.
Thus we have $f_{AG\_P_6 S} (x) > 0$, $\forall x > 0,\,x \ne 1$, thereby
proving the required result.

Parts 34-41 prove the sequences of inequalities appearing in (\ref{eq45}).
\end{enumerate}
\end{proof}

\begin{remark} (i) In view of Theorem 3.1, we have the following improvement over the
inequalities (\ref{eq11}):
\[
D_{SA} \leqslant \left\{ {\begin{array}{l}
 \tfrac{1}{3}D_{SH} \leqslant \tfrac{1}{2}D_{AH} \leqslant 4D_{N_2 N_1 }
\leqslant \tfrac{4}{3}D_{N_2 G} \leqslant D_{AG} \leqslant 4D_{AN_2 } \\\\
 \left\{ {\begin{array}{l}
 \tfrac{4}{5}D_{SN_2 } \\
 \tfrac{3}{4}D_{SN_3 } \\
 \end{array}} \right\} \leqslant \tfrac{2}{3}D_{SN_1 } \leqslant
\tfrac{1}{2}D_{SG} \\
 \end{array}} \right..
\]

\noindent (ii) The results appearing Parts 1-41 bring us some very interesting measures given by
\[
V_k (a,b) = b\,f_k \left( {a \mathord{\left/ {\vphantom {a b}} \right.
\kern-\nulldelimiterspace} b} \right),
\,
k = 1,2,3,4,
\]

\noindent where
\begin{align}
f_1 (x) & = \frac{\left( {x + 1} \right)^2\left( {x - 1} \right)^4}{\left(
{x^3 + 1} \right)\left( {x^2 + 1} \right)},\notag\\
f_2 (x) & = \frac{\left( {\sqrt x - 1} \right)^2}{x + 1}, \notag\\
f_3 (x) & = \frac{\sqrt x \left( {\sqrt x - 1} \right)^4}{\left( {x + 1}
\right)\left( {\sqrt x + 1} \right)^2} \notag\\
\intertext{and}
f_4 (x) & = \frac{\left( {\sqrt x - 1} \right)^4\left[ {\left( {\sqrt x - 1}
\right)^2 + \sqrt x } \right]}{12\left( {x + 1} \right)}.\notag
\end{align}

The measures $V_1 (a,b)$ is due to part 1, the measure $V_2 (a,b)$ is due to
parts 6, 8, 13, 23 and 24, the measure $V_3 (a,b)$ is due to parts 9 and 16
and the measure $V_4 (a,b)$ is due to parts 14 and 15.
\end{remark}

\section{Connections with Information Measures}

\[
\Gamma _n = \left\{ {P = (p_1 ,p_2 ,...,p_n )\left| {p_i > 0,\sum\limits_{i
= 1}^n {p_i = 1} } \right.} \right\},
\,
n \geqslant 2,
\]

\noindent
be the set of all complete finite discrete probability distributions. For
all $P,Q \in \Gamma _n $, the following inequalities are already proved by
author \cite{tan2}:
\begin{equation}
\label{eq46}
\frac{1}{2}D_{AH} \leqslant I \leqslant 4D_{N_2 N_1 } \leqslant
\frac{4}{3}D_{N_2 G}
 \leqslant D_{AG} \leqslant 4\,D_{AN_2 } \leqslant \frac{1}{8}J \leqslant T
\leqslant \frac{1}{16}\Psi,
\end{equation}

\noindent where

\begin{align}
\label{eq47}
I(P\vert \vert Q) & = \frac{1}{2}\left[ {\sum\limits_{i = 1}^n {p_i \ln \left(
{\frac{2p_i }{p_i + q_i }} \right) + } \sum\limits_{i = 1}^n {q_i \ln \left(
{\frac{2q_i }{p_i + q_i }} \right)} } \right],\\
\label{eq48}
J(P\vert \vert Q) & = \sum\limits_{i = 1}^n {(p_i - q_i )\ln \left( {\frac{p_i
}{q_i }} \right)}, \\
\label{eq49}
T(P\vert \vert Q) & = \sum\limits_{i = 1}^n {\left( {\frac{p_i + q_i }{2}}
\right)\ln \left( {\frac{p_i + q_i }{2\sqrt {p_i q_i } }} \right)}\\
\intertext{and}
\label{eq50}
\Psi (P\vert \vert Q) & = \sum\limits_{i = 1}^n {\frac{(p_i - q_i )^2(p_i +
q_i )}{p_i q_i }}.
\end{align}

The measures $I(P\vert \vert Q)$, $J(P\vert \vert Q)$ and $T(P\vert \vert Q)$ are the respectively, the well-know \textit{Jensen-Shannon divergence}, $J - $\textit{divergence }and \textit{arithmetic and geometric mean divergence}. The measure $\Psi (P\vert \vert Q)$ is \textit{symmetric chi-square divergence}. For detailed study on these measures refer to Taneja \cite{tan1, tan2, tan4}. Moreover, $D_{AH} (P\vert \vert Q) = \tfrac{1}{2}\Delta (P\vert \vert Q)$ and $D_{AG} (P\vert \vert Q) = h(P\vert \vert Q)$, where $\Delta (P\vert \vert Q)$ and $h(P\vert \vert Q)$ are the well-known \textit{triangular's} and
\textit{Hellinger's} \textit{discriminations} respectively.

\bigskip
In the theorem below we shall make connections of the classical divergence measures given in (\ref{eq47})-(\ref{eq50}) with the inequalities given in (\ref{eq43}). Moreover the theorem below unifies the inequalities (\ref{eq43}) and (\ref{eq46}).

\begin{theorem} The following inequalities hold:
\begin{align}
& \left\{ {\begin{array}{l}
 \tfrac{2}{5}D_{P_5 H} \\
 \tfrac{2}{3}D_{AP_4 } \leqslant I \\
 \end{array}} \right\} \leqslant 4D_{N_2 N_1 } \leqslant \frac{4}{3}D_{N_2
G} \leqslant D_{AG} \leqslant 4D_{AN_2 } \leqslant \notag\\
& \hspace{20pt} \leqslant \frac{2}{3}D_{P_5 G} \leqslant \left\{ {\begin{array}{l}
 D_{P_5 N_1 } \leqslant \tfrac{6}{5}D_{P_5 N_3 } \leqslant
\tfrac{4}{3}D_{P_5 N_2 } \leqslant 2D_{P_5 A} \\
 \tfrac{1}{8}J \\
 \end{array}} \right\} \leqslant T \leqslant \frac{1}{16}\Psi. \notag
\end{align}
\end{theorem}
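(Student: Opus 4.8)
The plan is to build the chain of the theorem from three ingredients: the already proved sequence (\ref{eq43}), the divergence sequence (\ref{eq46}), and a short list of new comparisons that bridge the Gini-mean differences directly to the classical divergences $I$, $J$ and $T$.

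First I would transport everything from the pointwise level to probability distributions. Since $D_{tp}(P\vert\vert Q)=\sum_i q_i\,g_{tp}(p_i/q_i)$, every inequality $c_1 g_{t_1 p_1}(x)\leqslant c_2 g_{t_2 p_2}(x)$ valid for all $x>0$ in Theorem 3.1 yields, on putting $x=p_i/q_i$, multiplying by $q_i>0$ and summing, the statement $c_1 D_{t_1 p_1}\leqslant c_2 D_{t_2 p_2}$ for all $P,Q\in\Gamma_n$. In this way the entire segment of (\ref{eq43}) from $\tfrac{2}{5}D_{P_5 H}$ and $\tfrac{2}{3}D_{AP_4}$ through $4D_{N_2 N_1}\leqslant\tfrac{4}{3}D_{N_2 G}\leqslant D_{AG}\leqslant 4D_{AN_2}\leqslant\tfrac{2}{3}D_{P_5 G}\leqslant D_{P_5 N_1}\leqslant\tfrac{6}{5}D_{P_5 N_3}\leqslant\tfrac{4}{3}D_{P_5 N_2}\leqslant 2D_{P_5 A}$ carries over verbatim. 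I would then quote (\ref{eq46}) from \cite{tan2} to supply $I\leqslant 4D_{N_2 N_1}$ and the tail $\tfrac{1}{8}J\leqslant T\leqslant\tfrac{1}{16}\Psi$. The two chains meet at the common nodes $4D_{N_2 N_1}$ and $4D_{AN_2}$, so they splice into one consistent sequence.

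The genuinely new content is the three bridges $\tfrac{2}{3}D_{AP_4}\leqslant I$, $\tfrac{2}{3}D_{P_5 G}\leqslant\tfrac{1}{8}J$ and $2D_{P_5 A}\leqslant T$, which I would treat uniformly with Lemma 3.1. The observation that makes this possible is that, although $I,J,T$ carry logarithms through (\ref{eq47})--(\ref{eq49}), their Csisz\'ar generators have rational second derivatives: with $f_I(x)=\tfrac12[x\ln x-(x+1)\ln\tfrac{x+1}{2}]$, $f_J(x)=(x-1)\ln x$ and $f_T(x)=\tfrac{x+1}{2}\ln\tfrac{x+1}{2\sqrt x}$ (each of which reproduces the corresponding divergence as $\sum_i q_i f_k(p_i/q_i)$), one computes $f_I''(x)=\tfrac{1}{2x(x+1)}$, $f_J''(x)=\tfrac{x+1}{x^2}$ and $f_T''(x)=\tfrac{x^2+1}{4x^2(x+1)}$. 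For each pair I would follow the recipe of Remark 3.1: put $\beta=f_1''(1)/f_2''(1)$, which together with the second derivatives (\ref{eq34}), (\ref{eq28}), (\ref{eq24}) returns exactly the constants $\tfrac32$, $\tfrac{3}{16}$ and $\tfrac12$, and then verify $\eta''=\beta f_2''-f_1''\geqslant 0$. Because every term is now rational, this is a purely algebraic inequality settled by the difference-of-squares argument of Part 2 of Theorem 3.1; for instance $\tfrac32 f_I''-f_{AP_4}''\geqslant 0$ collapses, after clearing denominators and writing $t=\sqrt x$, to $(t-1)^4\geqslant 0$. Convexity of $\eta$ with $\eta(1)=\eta'(1)=0$ then forces $\eta\geqslant 0$ pointwise, and summation delivers each bridge.

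The main obstacle is not any single computation but keeping the bookkeeping straight: one must check that the three new bridges are compatible with both inherited chains at the shared nodes, so that the result reads as a single ordered sequence rather than two overlapping ones. The only analytic subtlety is the recognition that the logarithmic divergences become tractable precisely because their second derivatives are rational, which reduces every new step to the same pattern already exploited throughout Theorem 3.1 — a factor $(\sqrt x-1)^{2k}$ times a manifestly positive polynomial.
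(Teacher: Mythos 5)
Your proposal is correct and follows essentially the same route as the paper: both reduce the theorem to splicing (\ref{eq43}) and (\ref{eq46}) together via exactly the three bridges $D_{AP_4}\leqslant\tfrac{3}{2}I$, $D_{P_5G}\leqslant\tfrac{3}{16}J$, $D_{P_5A}\leqslant\tfrac{1}{2}T$, with the same constants $\beta=f_1''(1)/f_2''(1)$ and the same appeal to Lemma 3.1. The only (minor) difference is in verifying the bound on $f_1''/f_2''$: the paper differentiates the ratio and shows it peaks at $x=1$, whereas you factor $\beta f_2''-f_1''$ directly as $(\sqrt{x}-1)^4$ times a positive polynomial — both checks go through, and your computed $f_I''$, $f_J''$, $f_T''$ and the resulting factorizations are correct.
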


\begin{proof} In view of the inequalities given in (\ref{eq43}) and (\ref{eq46}), it sufficient to show the following three inequalities:
\begin{itemize}
\item [(i)] $ D_{AP_4 } \leqslant \frac{3}{2}I;$\
\item [(ii)] $ D_{P_5 G} \leqslant \frac{3}{16}J;$\
\item [(iii)] $ D_{P_5 A} \leqslant \frac{1}{2}T.$
\end{itemize}

Since in each part of the above expressions we have logarithmic form, we shall apply a different approach to show the above three inequalities.

\bigskip
\noindent \textbf{(i) For }$\bf{D_{AP_4 } \leqslant \frac{3}{2}I}$\textbf{:} Let us consider
\[
g_{AP_4 \_I} (x) = \frac{{f}''_{AP_4 } (x)}{{f}''_I (x)} = \frac{12x(x +
1)}{\sqrt x \left( {\sqrt x + 1} \right)^4},
\,
x \ne 1,\;x > 0.
\]

Calculating the first order derivative of the function $g_{AP_4 \_I} (x)$ with respect to $x$, one gets
\begin{equation}
\label{eq51}
{g}'_{AP_4 \_I} (x) = - \frac{6\left( {x^{3 / 2} - 3x + 3\sqrt x - 1}
\right)}{\sqrt x \left( {\sqrt x + 1} \right)^5}
 = - \frac{6\left( {\sqrt x - 1} \right)^3}{\sqrt x \left( {\sqrt x + 1}
\right)^5}\begin{cases}
 { > 0,} & {x < 1} \\
 { < 0,} & {x > 1} \\
\end{cases}
\end{equation}

In view of (\ref{eq51}), we conclude that the function $g_{AP_4 \_I} (x)$ is increasing in $x \in (0,1)$ and decreasing in $x \in (1,\infty )$, and hence
\begin{equation}
\label{eq52}
\beta_{AP_4 \_I} = \mathop {\sup }\limits_{x \in (0,\infty )} g_{AP_4 \_I} (x) =
g_{AP_4 \_I} (1) = \frac{3}{2}.
\end{equation}

By the application of Lemma 3.1 with (\ref{eq52}) we get the required result.

\bigskip
\noindent \textbf{(ii) For }$\bf{D_{P_5 G} \leqslant \frac{3}{16}J}$\textbf{:} Let us
consider
\[
g_{P_5 G\_J} (x) = \frac{{f}''_{P_5 G} (x)}{{f}''_J (x)} = \frac{3\sqrt x
\left( {4x^{3 / 2} + 4\sqrt x + x^2 - 2x + 1} \right)}{4\left( {x + 1}
\right)\left( {\sqrt x + 1} \right)^4}.
\]

Calculating the first order derivative of the function $g_{P_5 G\_J} (x)$ with respect to $x$, one gets
\begin{equation}
\label{eq53}
{g}'_{P_5 G\_J} (x) = - \frac{3\left( {\sqrt x - 1} \right)^3\left( {x^2 + 8x^{3 / 2} + 6x +
8\sqrt x + 1} \right)}{8\sqrt x \left( {\sqrt x + 1} \right)^5\left( {x + 1}
\right)^2}\begin{cases}
 { > 0,} & {x < 1} \\
 { < 0,} & {x > 1} \\
\end{cases}
\end{equation}

In view of (\ref{eq53}), we conclude that the function $g_{P_5 G\_J} (x)$ is increasing in $x \in (0,1)$ and decreasing in $x \in (1,\infty )$, and hence
\begin{equation}
\label{eq54}
\beta_{P_5 G\_J} = \mathop {\sup }\limits_{x \in (0,\infty )} g_{P_5 G\_J} (x) = g_{P_5
G\_J} (1) = \frac{3}{2}.
\end{equation}

By the application of Lemma 3.1 with (\ref{eq54}) we get the required result.

\bigskip
\noindent \textbf{(iii) For }$\bf{D_{P_5 A} \leqslant \frac{1}{2}T}$\textbf{:} Let us
consider
\[
g_{P_5 A\_T} (x) = \frac{{f}''_{P_5 A} (x)}{{f}''_T (x)} = \frac{2\sqrt x
\left( {4x^{3 / 2} + 4\sqrt x + x^2 - 6x + 1} \right)\left( {x + 1}
\right)}{\left( {x^2 + 1} \right)\left( {\sqrt x + 1} \right)^4}.
\]

Calculating the first order derivative of the function $g_{P_5 A\_T} (x)$ with respect to $x$, one gets
\begin{equation}
\label{eq55}
 {g}'_{P_5 A\_T} (x) = - \frac{\left( {\sqrt x - 1} \right)^3\left( {\begin{array}{l}
 8\sqrt x \left( {x^2 + 1} \right)\left( {\sqrt x - 1} \right)^2 + \\
 + x^4 + 14x^3 + 10x^2 + 14x + 1 \\
 \end{array}} \right)}{\sqrt x \left( {\sqrt x + 1} \right)^5\left( {x^2 +
1} \right)^2}\begin{cases}
 { > 0,} & {x < 1} \\
 { < 0,} & {x > 1} \\
\end{cases}
\end{equation}

In view of (\ref{eq55}), we conclude that the function $g_{P_5 A\_T} (x)$ is increasing in $x \in (0,1)$ and decreasing in $x \in (1,\infty )$, and hence
\begin{equation}
\label{eq56}
\beta_{P_5 A\_T} = \mathop {\sup }\limits_{x \in (0,\infty )} g_{P_5 A\_T} (x) = g_{P_5
A\_T} (1) = \frac{1}{2}.
\end{equation}

By the application of Lemma 3.1 with (\ref{eq56}) we get the required result.

\bigskip
Parts (i)-(iii) completes the proof of the Theorem 4.1.
\end{proof}

\bigskip
Here we have referred Lemma 3.1, but its extension for the probability distributions is already proved in \cite{tan2, tan3}.

\end{document}